%
%
%
%
%
%
%
\documentclass[%
 aip,
 amsmath,amssymb,floatfix,
 reprint,%
]{revtex4-1}

\usepackage{graphicx}
\usepackage{dcolumn}
\usepackage{bm}

\usepackage[utf8]{inputenc}
\usepackage[T1]{fontenc}
\usepackage{mathptmx}
\usepackage{etoolbox}

\usepackage{dsfont}
\usepackage{mathrsfs}
\usepackage{tikz}
\usepackage{graphicx} 
\usepackage{float}
\usepackage{wrapfig}
\usepackage{amsthm}

\newtheorem{mydef}{Definition} 
\newtheorem{mythm}{Theorem}
\newtheorem{mylem}{Lemma}
\newtheorem{myprop}{Proposition}

\newcommand{\sotimes}{\widehat{\otimes}} 
\newcommand{\grtrans}{\mathbb{E}} 
\newcommand{\grtranspose}{\textup{t}} 
\newcommand{\virtHilb}{H} 
\newcommand{\Tr}{\textup{Tr}} 


\makeatletter
\def\@email#1#2{%
 \endgroup
 \patchcmd{\titleblock@produce}
  {\frontmatter@RRAPformat}
  {\frontmatter@RRAPformat{\produce@RRAP{*#1\href{mailto:#2}{#2}}}\frontmatter@RRAPformat}
  {}{}
}%
\makeatother
\begin{document}
	
	\bibliographystyle{siam}

\preprint{AIP/123-QED}

\title[]{Non-local order parameters\\for fermion chains via the partial transpose}
\author{L. P. Mayer}
 \affiliation{Physics Department, University of Cologne.}
 \email{lmayer@thp.uni-koeln.de.}

\date{\today}

\begin{abstract}
In the last two decades, a vast variety of topological phases have been described, predicted, classified, proposed, and measured. While there is a certain unity in method and philosophy, the phenomenology differs wildly. This work deals with the simplest such case: fermions in one spatial dimension, in the presence of a symmetry group $G$ which contains anti-unitary symmetries. A complete classification of topological phases, in this case, is available. Nevertheless, these methods are to some extent lacking as they generally do not allow to determine the class of a given system easily. This paper will take up proposals for non-local order parameters defined through anti-unitary symmetries. They are shown to be homotopy invariants on a suitable set of ground states. For matrix product states, an interpretation of these invariants is provided: in particular, for a particle-hole symmetry, the invariant determines a real division super algebra $\mathbb{D}$ such that the bond algebra is a matrix algebra over $\mathbb{D}$.
\end{abstract}

\maketitle

\section{Introduction}
Thermodynamical phases are stable in that their properties generically do not depend strongly on variations of external parameters -- a glass of water has roughly the same characteristics at $14^\circ$C as at $15^\circ$C. This can be formalized as the statement that correlation functions generically depend continuously on external parameters $\lambda$.
Recently, interest mounted in a stronger notion of stability. The systems are quantum, gapped, and at $0$K, i.e., described -- when in equilibrium -- by the ground state $\Omega$ of the Hamiltonian $H$. Denote the set of all ground states as $\mathcal{T}$. Consider slow variations in parameter space, $[0,1] \ni t \mapsto \lambda(t)$. If $H(\lambda(t))$ is gapped and $\Omega(\lambda(t))$ unique for all $t$, the time evolution generated by $H(\lambda(t))$ will carry $\Omega(\lambda(0))$ to $\Omega(\lambda(1))$. This defines an equivalence relation denoted as $\Omega(\lambda(0)) \sim \Omega(\lambda(1))$. The equivalence classes are called \emph{topological phases}\footnote{This work is in one spatial dimension, where \emph{long-range-entangled}(LRE) or \emph{symmetry protected} (SPT) and \emph{short-range entangled}(SRE) or \emph{topologically ordered} topological phases can be treated simultaneously\cite{Chen2010,Fidkowski2011}. These distinctions become more relevant in higher dimensions.} -- they are stable under continuous deformations. In many contexts, these classes are too large. Physical system exist in a definite spatial dimension $d$. They might have a symmetry group $G$. As a sort of regularization procedure, impose a uniform bound $\ell_c$ on the correlation length\footnote{Another approach is to impose regularity conditions on the path directly\cite{Bachmann2015}. Another interesting restriction, which is not pursued here, is that the microscopic degrees of freedom are kept constant along the path\cite{kennedy2016topological}.}. The set of states abiding by these constraints is denoted as $\mathcal{T}_{\ell_c}^{G|d}$. Even with this more limited set of deformations, it is maybe surprising that there are distinct topological phases. To see what this entails, consider a path -- in the set of all $G$-symmetric states in $d$ dimensions, a larger set than $\mathcal{T}_{\ell_c}^{G|d}$ -- connecting two states crossing a quantum phase transition. If the two states are in different topological phases, \emph{any} path connecting them will contain at least one quantum phase transition point. The reverse is false; even if two states can be connected through a quantum phase transition, it might also be possible to find a non-critical connecting path.
\\
In one\cite{Bourne_2021} and two\cite{naaijkens2021split,ogata2021classification} dimensions, such phases have been classified. Building on earlier work within the matrix product states (MPS) paradigm\cite{Kapustin2017,Kapustin2018},a cohomology class $\nu$ is associated to a quantum state by virtually cutting the system in half. $\nu$ is determined by algebraic properties of the half-space operator algebra as represented on the Hilbert space given through the Gel'fand-Naimark-Segal construction. This \emph{does} give an explicit way of determining the class of any given state, which, moreover, is fairly easy to do when the state is given as a matrix product state respecting the symmetry. However, if the state is presented differently, it first has to be brought into tensor network form\cite{pollmann2017symmetry}. As the invariants are algebraic, further difficulties might arise when the symmetry is only approximate, or even -- as in numerical schemes -- the presentation preserves symmetries of the Hamiltonian only approximately. Such problems can be remedied by using \emph{order parameters} to characterize the phase a state is in. For bosonic topological phases, string order parameters are known to extract cohomological invariants of MPS\cite{Pollmann2012Detec}. There are also suggestions for generalized string order parameters useful for anti-unitary symmetries in fermion systems\cite{Ryu2017}. In this case, the cohomological invariants play a slightly different r\^{o}le. Bond algebras of pure state fermionic MPS (fMPS) are graded simple, i.e., isomorphic to complex Clifford algebras. \emph{Unitary} symmetries do not affect this, but \emph{anti-unitaries} can. In particular, if the anti-unitary transformation squares to the identity on physical Hilbert space, the bond algebras are promoted to \emph{real} Clifford algebras\cite{Fidkowski2011}. The structure can change by adding unitary symmetries\cite{geiko2021dyson}. Reflections, which are unitary operations that act not on-site -- beyond the scope of this work -- behave in this similarly to anti-unitary operations\cite{Bultinck_2017}, as can be rationalized through an appeal to the continuum\cite{Kapustin2015}.
\\
This work adds to these results in two directions: (a) the connection between the cohomological invariants for fMPS and the generalized string order parameters is established; and (b) this is used to define generalized string order parameters \emph{beyond} MPS. The second part relies strongly on MPS as an intermediate link in the argument. Hence (c) an exposition on fMPS from the perspective of states as functionals on an operator algebra. Compared to vector approaches\cite{Bultinck_2017} this allows working directly in the thermodynamic limit.

\section{Fermion Chains}

The starting point is the description of one-dimensional fermions in terms of creation and annihilation operators. To more smoothly include particle-hole symmetries, this will be done here starting from a Nambu space $\mathscr{W}$ endowed with a bilinear form $\{ \cdot,\cdot\}$ and a bracket-preserving real structure $\rho$. The relevant case is that of a chain of atoms with $p$ orbitals, the single particle motion is described by a space $\mathscr{V} := \ell^2(\mathbb{Z};\mathbb{C}^p)$. The space of orbitals will be denoted by $V$. Then $\mathscr{W} = \mathscr{V} \oplus \mathscr{V}^* = \ell^2(\mathbb{Z},W)$ with $W = V \oplus V^*$ and 
\begin{align*}
\{ f_1 + \varphi_1 , f_2 + \varphi_2 \} &:= \varphi_1(f_2) + \varphi_2(f_2) \ , \\
 \quad \rho(f + \langle g, \cdot \rangle) &:= g + \langle f,\cdot \rangle \ .
\end{align*}
Second quantization is achieved by introducing abstract operators $\gamma(w)$ for $w \in \mathscr{W}$ satisfying
\begin{align}
\begin{aligned}
\gamma(\alpha w + \beta w') &= \alpha \gamma(w) + \beta \gamma(w') \ , \\ 
\gamma(w) \gamma(w') + \gamma(w') \gamma(w) &= \{w,w'\}1 \ , \\
  \gamma(w)^*\! &= \gamma(\rho(w)) \ .
\end{aligned} \label{eq:CARalg}
\end{align}
The CAR-algebra is the $C^*$-algebra generated by the operators $\gamma(w)$ with the relations \ref{eq:CARalg}.
\subsection{States}
A \emph{state} is a linear positive normalized functional $\omega$ on the CAR-algebra. In less technical terms, it is the collection of all correlation functions
\begin{align}
\omega(\gamma(w_1)\cdots \gamma(w_n)) \ .
\end{align}
If a wavefunction $w$ has support in a region $X \subset \mathbb{Z}$, then $\gamma(w)$ is said to be \emph{supported} in $X$. More generally, if $\mathcal{O}$ is a polynomial in $\gamma(w_1),...,\gamma(w_n)$, and $w_1,...,w_n$ all have support contained in $X$, then $\mathcal{O}$ is said to be supported in $X$ and this is indicated by a subscript: $\mathcal{O}_X$. Finally, the translate by $d$ sites of this operator is designated by $\mathcal{O}_{X+d}$.
\\
Importantly, a \emph{pure} state \emph{clusters} \cite{kastler1966invariant,ruelle1966states,miyata1973clustering,bratteli2012operator}; assuming translational invariance:
\begin{align}
\lim_{d\rightarrow \infty} \omega(\mathcal{O}_{1,X} \mathcal{O}_{2,Y+d}) = \omega(\mathcal{O}_{1,X}) \omega(\mathcal{O}_{2,Y}) \ . \label{eq:omegacluster}
\end{align}
In the following it will be assumed that this happens at exponential speed, i.e. there is $C>0$ and $\ell_c < \infty$ such that for all operators $\mathcal{O}_{1,X},\mathcal{O}_{2,Y}$:
\begin{align}
\begin{aligned}
|\omega(\mathcal{O}_{1,X}\mathcal{O}_{2,Y}) - \omega(\mathcal{O}_{1,X})&\omega(\mathcal{O}_{2,Y})| \leq \\
&\leq C \| \mathcal{O}_{1,X}\| \|\mathcal{O}_{2,Y}\| e^{-d(X,Y)/\ell_c} \ .
\end{aligned} \label{eq:omegaexpcluster}
\end{align}
Here, $d(X,Y) := \min\{|x-y| \ : \ x\in X, y \in Y \}$.
\\
Why this restriction? Topological phases are phenomena pertaining to unique gapped ground states of local Hamiltonians. Those are guaranteed\cite{hastings2006,hastings2021gapped} to satisfy condition \ref{eq:omegaexpcluster}. To demand exponential correlations directly simplifies the matter, as the connection between Hamiltonians and their ground states is famously inexplicit. Additionally, Hamiltonians contain vastly more information than just their ground states.
\\
In a first step in the argument, however, there will be a more severe restriction, namely the state $\omega$ will be required to be a \emph{super matrix product state} (sMPS). This notion will receive some further exposition in section \ref{sec:sMPS}, for now the following definition will suffice:
\begin{mydef} \label{def:sMPS}
	A super matrix product state of bond dimension $D \in \mathbb{N}$ is a pure translational invariant state $\omega$ such that the reduced density matrices $\sigma_n(\omega)$ of $\omega$ to $\{1,...,n\}$ satisfy $\textup{rank}(\sigma_n) \leq D$, and $D$ is the smallest such integer.
\end{mydef}
\noindent
This is quite restrictive. Exponentially correlated states satisfy so-called \emph{area laws} for $\alpha > 0$- R\`{e}nyi entanglement entropies \cite{Brand_o_2014}, i.e., such states satisfy
\begin{align*}
\frac{1}{1-\alpha} \log\textup{Tr}(\left[\sigma_n\right]^\alpha) \leq C_\alpha < \infty \ , \quad \alpha > 0 \ .
\end{align*}
If this would hold all the way down to $\alpha = 0$, the assumption in definition \ref{def:sMPS} would include all exponentially correlated states, hence all unique gapped ground states to local Hamiltonians. However, even gapped free-fermion ground states have diverging rank of the reduced density matrices, if their bands are not flat. Therefore, statements made for the set of sMPS have no immediate import to general ground states.
\\
It should be noted that the defining condition of a sMPS of bond dimension $D$ allows to reproduce exactly the expectation values of observables supported on a region of size\footnote{This approximation is much better for exponentially correlated states \cite{Dalzell_2019,schuch2017matrix}.} $\propto \log D$. This allows to construct, for \emph{any} state $\omega$, a series of sMPS $\omega_{\alpha}$, with in general diverging bond dimensions $D_\alpha$, that approximates $\omega$ in the $w^*$-sense \cite{Fannes1992a,Fannes1992b}, i.e., for all observables $\mathcal{O}$ and any $\epsilon>0$ there is an $\alpha_*$ such that $|(\omega-\omega_{\alpha})(\mathcal{O})| \leq \epsilon \|\mathcal{O}\|$ for all $
\alpha > \alpha_*$. In particular, the reduced density matrices converge (in any topology). Therefore generality can be attained -- starting from sMPS -- by considering the limit of diverging bond dimension. Without further constraints, this, however, is again too big a set, as it includes e.g. critical states. To exclude them, here the following set is proposed:
\begin{mydef} \label{def:wellapprox}
	A state that can be written as a $w^*$-limit of sMPS, $\omega_{\alpha} \stackrel{w^*}{\rightarrow} \omega$, where the states $\omega_{\alpha}$ have a uniform bound on their correlation length, will be called a \emph{well-approximable state}. Their set with a given correlation bound $\ell_c$ is denoted by $\mathcal{T}_{\ell_c}$.
\end{mydef}
\noindent
This is a convenient set for it gives analytical control to limits, the limit states are exponentially correlated, and it has been argued that such states have quasi-particle excitations \cite{Zauner_2015,Vanderstraeten_2017} of mass $\propto 1/\ell_c$.

\subsection{Symmetries}
To produce a richer phenomenology assume there is a symmetry action $\mathscr{G}$ acting by homomorphisms on $\mathscr{W}$. This is represented by automorphisms $x \mapsto \alpha_g x (\alpha_g)^{-1}$ satisfying the covariance relation
\begin{align}
\alpha_g \gamma(w) (\alpha_g)^{-1}= \gamma(g^{-1}(w)) \ . \label{eq:gLift}
\end{align}
Here and in the following, assume translational symmetry, and that the rest of $\mathscr{G}$ acts on-site, i.e., $\mathscr{G} = \mathbb{Z} \times G$.
\\
Furthermore, the focus here is on the case that $G$ contains anti-unitary operations $K$. Recalling the picture of $W = V \oplus V^*$, there are the two possibilities, (i) $K : V \rightarrow V$ or (ii) $K: V \rightarrow V^*$. It cannot be more general since otherwise its application would create states of indefinite fermion parity, which is forbidden by a superselection rule. Both of these reverse the direction of time. In fact, they can be brought in a standard form:
\begin{mylem} \label{lem:AntiUstandard}
	Suppose the anti-unitary $K$ generates a group of automorphisms on $W$. Then w.l.o.g. this group factorizes as $U \times A$ where $U$ is an abelian group acting by unitaries, and $A$ is either $\mathbb{Z}_2$ or $\mathbb{Z}_4$, and acts by anti-unitaries.
\end{mylem}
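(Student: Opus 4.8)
\emph{Proof idea.} The plan is to push $K$ down to its action on the Nambu space $W$ and run a Wigner–type analysis of the resulting anti-linear operator, then split off the anti-unitary direction by a little group theory.

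\emph{Step 1: linearisation.} Since $K$ is anti-unitary, the covariance relation \eqref{eq:gLift} forces the induced map on $W$ (still written $K$) to be conjugate-$\mathbb{C}$-linear, to commute with $\rho$, and to obey $\{Kw,Kw'\}=\overline{\{w,w'\}}$; equivalently $K$ is anti-unitary for the Hermitian form $\langle w,w'\rangle:=\{\rho w,w'\}$. Hence every even power $K^{2j}$ is an honest $\mathbb{C}$-linear, $\langle\cdot,\cdot\rangle$-unitary and $\{\cdot,\cdot\}$-preserving automorphism, while every odd power is anti-unitary; so (the symmetry group being compact) the cyclic group generated by $K$ is finite, with its unitary elements $\langle K^{2}\rangle$ forming an index-two subgroup.

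\emph{Step 2: normal form for $K^{2}$ (the crux).} Put $U_{0}:=K^{2}$. It is unitary and commutes with $K$, so its spectrum is conjugation-invariant; invariance of $\{\cdot,\cdot\}$ under $U_{0}$ makes eigenspaces $W_{\zeta},W_{\mu}$ orthogonal for $\{\cdot,\cdot\}$ unless $\mu=\bar\zeta$, and $\rho K=K\rho$ gives $\rho W_{\zeta}=W_{\bar\zeta}$. Thus $W$ splits into the $(\pm1)$-eigenspaces of $U_{0}$ and ``complex'' blocks $W_{\zeta}\oplus W_{\bar\zeta}$ on which $K$ swaps the summands anti-linearly. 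On each complex block the relative phase rotation $u_{\theta}$ — multiplication by $e^{i\theta}$ on $W_{\zeta}$, by $e^{-i\theta}$ on $W_{\bar\zeta}$, and by the identity on the rest of $W$ — is a bona fide automorphism of $(W,\{\cdot,\cdot\},\rho)$: it commutes with $\rho$ because $\rho W_{\zeta}=W_{\bar\zeta}$, and it preserves $\{\cdot,\cdot\}$ because that form pairs the factor of phase $e^{i\theta}$ with the factor of phase $e^{-i\theta}$. One checks $u_{\theta}$ commutes with $K$ on the block, whence $(u_{\theta}K)^{2}=u_{\theta}^{2}U_{0}$ there; taking $e^{2i\theta}=\bar\zeta$ annihilates it. Absorbing all these corrections into $K$, one reduces to the case where $U_{0}=K^{2}$ acts by $\pm1$ on each remaining block; a further look at the $(-1)$-eigenspace (which carries a quaternionic structure, hence is even-dimensional) shows that, up to the $\mathbb{Z}_{2}$ ambiguity of the spin/Bogoliubov lift, $K^{2}$ can be taken central, i.e. equal to $1$ or to the fermion parity $(-1)^{F}$. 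In either case $K^{4}=1$. This is the main obstacle: one must justify that each modification is implementable by a genuine symmetry (this is what makes ``w.l.o.g.'' honest), and one must identify precisely which residual datum — $1$ versus $(-1)^{F}$ — survives, since that is exactly what distinguishes the two cases.

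\emph{Step 3: factorisation.} With $K^{4}=1$ the group generated by $K$ together with the auxiliary unitaries $u_{\theta}$ is generated by an anti-unitary $\tilde K$ of order $2$ (if $\tilde K^{2}=1$) or $4$ (if $\tilde K^{2}=(-1)^{F}$), together with a commuting abelian group $U$ of unitaries — the $u_{\theta}$'s and any odd-order unitary elements already present — with $\langle\tilde K\rangle\cap U=\{1\}$; equivalently, if one keeps $\langle K\rangle$ intact, write its order as $2^{a}m$ with $m$ odd and $a\le2$ by Step 2 and split it by the Chinese remainder theorem into $A\times U$ with $A\cong\mathbb{Z}_{2^{a}}$ generated by an odd (hence anti-unitary) power of $K$ and $U\cong\mathbb{Z}_{m}$ generated by an even (hence unitary) power. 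Either way one obtains the asserted $U\times A$ with $U$ abelian acting by unitaries and $A\in\{\mathbb{Z}_{2},\mathbb{Z}_{4}\}$ acting by anti-unitaries.
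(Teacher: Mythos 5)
Your argument reaches the stated conclusion, but by a genuinely different route from the paper's. The paper (appendix \ref{appendix:StandardForm}) works entirely with the abstract cyclic group $\langle K\rangle$: it splits on whether $\langle K^{2}\rangle$ is finite or dense in a $U(1)$, uses a Chinese--remainder splitting $\mathbb{Z}_{2n}\cong\mathbb{Z}_{2}\times\mathbb{Z}_{n}$ when $n$ is odd, and only in the dense case redefines $K\mapsto e^{-i\alpha X/2}K$ to kill $K^{2}$. You instead diagonalise $K^{2}$ on the Nambu space and gauge away each eigenvalue pair $\{\zeta,\bar\zeta\}$ by a block phase rotation $u_{\theta}$ commuting with $K$, $\rho$ and $\{\cdot,\cdot\}$ --- i.e.\ you perform the paper's case-(B) move uniformly and blockwise. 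This buys robustness: your route never needs the paper's case-(A2) claim that $K^{2m}$ is involutive, whose justification treats the \emph{linear} operator $K^{2m}$ as if it were anti-linear and which fails for a bare anti-unitary (anti-unitaries of order $8$ exist, e.g.\ $K=u\circ(\text{conj})$ with $u\bar u$ of order $4$); the redefinition you carry out is precisely what repairs that case. What the paper's route buys is brevity, since it never has to check that the auxiliary unitaries are bona fide automorphisms of $(W,\{\cdot,\cdot\},\rho)$ --- a check you do supply, correctly.

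Three caveats. (1) ``The symmetry group being compact, $\langle K\rangle$ is finite'' is false (irrational rotations); your Step 2 never uses finiteness, but the alternative ending of Step 3 (``keep $\langle K\rangle$ intact, write its order as $2^{a}m$ with $a\le 2$ by Step 2'') does, and is moreover wrong: Step 2 bounds the order of the \emph{modified} $\tilde K$, not of the original $K$, whose $2$-part can exceed $4$, in which case $\langle K\rangle$ itself does not split. Drop that alternative and keep the primary route. (2) The step you single out as the crux --- forcing $K^{2}$ into $\{1,(-1)^{F}\}$ rather than merely $K^{4}=1$ --- is not needed for the lemma as stated: $A\cong\mathbb{Z}_{2}$ or $\mathbb{Z}_{4}$ only requires $\tilde K^{4}=1$, which your $u_{\theta}$ construction already delivers even when the $(+1)$- and $(-1)$-eigenspaces of $\tilde K^{2}$ are both nontrivial. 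The finer datum $1$ versus $(-1)^{F}$ belongs to the invariant $\epsilon$ of equation \ref{eq:kepsilondef}, which lives at the level of the Fock-space lift, not of this lemma; so the gap you flag there does not affect the statement being proved. (3) Like the paper, you are loose about $U\cap A=\{1\}$ when $\tilde K^{2}\neq 1$; ``factorizes as $U\times A$'' should be read as ``is generated by commuting subgroups of the stated types''.
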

\noindent
A proof can be found in appendix \ref{appendix:StandardForm}.
\\
Assuming that $V$ parametrizes electron orbitals, transformations of the second type, $K:V\rightarrow V^*$, reverse the electric charge, and are therefore called \emph{particle-hole transformation} \cite{Zirnbauer2021}. Note that in this case w.l.o.g. $K^2 =1$. Indeed, if $(K^2)|_V = x$, define an involution $\widetilde{K} = ( \begin{smallmatrix}
0 & K \\
x K & 0
\end{smallmatrix})$. Transformations of this type arise e.g. through sublattice symmetries at half filling.\footnote{For Hubbard models with strong interactions, particle-hole transformations are good symmetries when correlated hopping is negligible \cite{skorenkyy2021electron}.}
\\
The first possibility, $K:V\rightarrow V$, arises, e.g., when the fundamental time-reversal symmetry of electron motion remains unbroken. If such a transformation has $K^2 = -1$, the sign cannot be transformed away. Such an operator is said to be of \emph{time-reversal} type.

\subsection{Homotopies and Invariants}
Restrict the set from defintion \ref{def:wellapprox} to $G$-invariant states and denote that set as $\mathcal{T}_{\ell_c}^G$. In order to filter out all the non-topological properties and simplify the analysis, introduce the concept of a \emph{homotopy of states}. This is done by considering continuous functions $ \nu : [0,1] \rightarrow \mathcal{T}_{\ell_c}^G$. The class of such functions allows to introduce an equivalence relation; $\omega \sim \omega'$ if there is a homotopy of states $\nu$ such that $\nu_0 = \omega$ and $\nu_1 = \omega'$. The main part of this work will be concerned with constructing \emph{homotopy invariants of string-order type} $Z:\mathcal{T}_{\ell_c}^G \rightarrow \mathbb{C}$; these are functions which are constant on all homotopies, and are of the form
\begin{align*}
Z(\omega) := \lim_{n \rightarrow \infty} \omega(\mathcal{O}_n) \ ,
\end{align*}
where $\mathcal{O}_n$ has support $n$. Their precise form is given in section \ref{sec:RP2PartFuncssMPS} below, but it is helpful to take a step back and spend some more time on the super matrix product states introduced in definition \ref{def:sMPS}. This, together with a diagrammatic formalism suitable to incorporate both the fermionic grading and the anti-linear operations, allows then to define, calculate and interpret non-local order parameters for super matrix product states. Afterwards, section \ref{chap:Beyond} will extend the order parameters to $\mathcal{T}_{\ell_c}^G$, using the formalism-independent definition.

\section{Real Super Matrix Product States}
The formalism used is a generalization of translation invariant matrix product states (TI-MPS) \cite{Fannes1992a,10.5555/2011832.2011833,cirac2021matrix} to fermionic systems. These are then called \emph{super} matrix product states (sMPS), since it is most convenient to introduce auxiliary super vector spaces \cite{Bultinck_2017}. There are also other approaches to deal with the fermionic grading, namely Grassmann algebras \cite{Wille2017} or modified tensor contractions \cite{Brugnolo2021}.
\\
The presence of anti-unitary symmetries will force the bond algebra to take a special form, motivating the name `real'.

\subsection{Construction of Super Matrix Product States} \label{sec:sMPS}
To define a states is to construct its correlation functions. This will be done in terms of super vector spaces\cite{10.1007/BFb0082020,moore2014quantum}.
\\
A \emph{super vector space} $V$ is a vector space together with a decomposition $V = V^0 \oplus V^1$. For $\xi \neq 0$ in $V^\mu$ introduce the \emph{parity} $|\xi| = \mu \in \{0,1\}$. Elements of $V^0$ are called \emph{even}, elements of $V^1$, \emph{odd}. All elements with definite parity are called \emph{homogeneous}. A \emph{super Hilbert space} is a super vector space with an inner product such that $V^0$ and $V^1$ are orthogonal. 
\\
For simplicity assume that the CAR-algebra is given with some explicit set of local orbitals $V$. To construct a super matrix product state of bond dimension $D$, choose a $D$-dimensional super Hilbert space $\virtHilb$. Finally, choose an even map $E:\Lambda(V) \rightarrow \mathscr{L}(\virtHilb)$. The image of $E$ in $\mathscr{L}(H)$ is called the \emph{bond algebra}, or the algebra of Krauss operators, and denoted by $A$. This bond algebra is a \emph{super algebra}, i.e., a super vector space with a multiplication that satisfies $A^\mu A^\nu \subseteq A^{\mu+\nu}$.
\begin{mydef} \label{def:sMPSTensor}
	An even map $E:\Lambda(V) \rightarrow \mathscr{L}(\virtHilb)$ is called a \emph{$G$-invariant super matrix product tensor} if the bond algebra satisfies $A^* \subseteq A$ and the map $\grtrans_L \in \mathscr{L}^2(\virtHilb)$,
	\begin{align}
	 \grtrans_L(a) = \sum_{s,t=1}^d (-1)^{|\psi_t||a|} \langle \psi_s,L(\psi_t)\rangle E(\psi_s) a E(\psi_t)^* \ ;\label{eq:grtransEL}
	\end{align}
	where $\psi_{1},...,\psi_d$ is a orthonormal basis of $\Lambda(V)$, satisfies:
	\begin{itemize}
		\item[(i)] $\grtrans_1(e) = e$, with $e$ the identity on $\virtHilb$;
		\item[(ii)] There is a projective $G$-representation $g \mapsto \widehat{\alpha}_g$ s.t.
		\begin{align*}
		(-1)^{|\widehat{\alpha}_g||\psi|}\widehat{\alpha}_g \circ E(\psi) \circ (\widehat{\alpha}_g)^{-1} = E\circ \alpha_{g}(\psi) \ . 
		\end{align*}
		\item[(iii)] There is no even projection $p$ and no integer $n$ such that $(\grtrans_1)^n(p \mathscr{L}(\virtHilb)p) \subseteq p \mathscr{L}(\virtHilb)p$.
	\end{itemize}
\end{mydef}
\noindent
$\grtrans \equiv \grtrans_1$ is often called the \emph{transfer operator}. A glance at its definition reveals that it is a completely positive operator. Therefore, its unitality (i) already implies $\| \grtrans\| = 1$. Condition (iii) for $n=1$, \emph{super-irreducibility}, guarantees for any positive map $\phi$ can be brought into unital form, with the help of a redefinition $\phi \rightarrow z^{-1/2}\phi(z^{1/2}\,\cdot\,z^{1/2})z^{-1/2}$. It furthermore guarantees \cite{https://doi.org/10.1112/jlms/s2-17.2.345} the existence of a strictly positive even linear functional $\lambda$ satisfying $\lambda \circ \grtrans = \lambda$. 
\\
\\
This data determines a functional on operators $L_1,...,L_n$ with supports on sites $1,...,n$ as
\begin{align}
\omega(L_1\cdots L_n) := \lambda \circ \grtrans_{L_1} \circ \cdots \circ \grtrans_{L_n}(e) \ . \label{def:sMPSexpval}
\end{align}
Then define $\omega$ on all other operators by translations. Since $e$ and $\lambda$ are fixed points of $\grtrans$, the expectation value of an operator $\mathcal{O}_{\{1,..,n\}}$ does not change when it is replaces by $1_{\{0\}}\mathcal{O}_{\{1,..,n\}}$ or $\mathcal{O}_{\{1,..,n\}} 1_{\{n+1\}}$.
Furthermore, a short argument will show that $\omega$ is positive. For its formulation, the notion of the \emph{super tensor product}, denoted $V \sotimes W$, is helpful. This is the space
\begin{align*}
 \left[(V^0 \otimes W^0) \oplus (V^1 \otimes W^1)\right] \oplus \left[(V^0 \otimes W^1) \oplus (V^1 \otimes W^0)\right] \ .
\end{align*}
If $V,W$ are super algebras, their tensor product is again a super algebra, with product
\begin{align}
(x_1 \sotimes x_2)\cdot (y_1 \sotimes y_2) := (-1)^{|x_2||y_1|} x_1x_2 \sotimes y_1y_2 \ . \label{eq:supermultiplication}
\end{align}
Now to see that $\omega$ is positive, recall that, by Stinespring's dilation theorem \cite{Stinespring1955,Szehr2016}, completely positive maps $\phi$ are exactly those that can be written as $\phi(a) = U^*\pi(a)U$, where $\pi$ is a $*$-representation and $U$ is an isometry. Thence, introduce isometries $U_n:\virtHilb \rightarrow \Lambda(V)^{\sotimes n} \sotimes \virtHilb $, and denote $E_s \equiv E(\psi_s)$:
\begin{align}
U_n(\xi) := \sum_{s_1,...,s_n} \psi_{s_1} \sotimes \cdots \sotimes \psi_{s_n} \sotimes (E_{s_1})^* \cdots (E_{s_n})^*\xi \ . \label{eq:Undef}
\end{align}
Now observe that for an operator $\mathcal{O}_n$ with support on $\{1,...,n\}$, 
the expectation value is expressed in terms of $U_n$ as $\omega(\mathcal{O})=\lambda((U_n)^*(\mathcal{O} \sotimes e)U_n )$. Hence $\omega$ is positive since $\lambda$ is.
\\
\\
Finally, it is incumbent to check that this construction is in fact compatible with defintion \ref{def:sMPS} given above. The first observation is that $\mathcal{O} \mapsto \grtrans_{\mathcal{O}}(e) = (U_n)^*(\mathcal{O}\sotimes e)U_n$ reduces the rank of an observable, so that $\textup{rank}(\sigma_n(\omega))  \leq \textup{rank}(\lambda) = D$. To see that this $D$ is minimal, again a short digression into the structure of super algebras is helpful. An ideal $I$ in an algebra $A$ is a subalgebra such that $AI \subseteq I \supseteq IA$. $I$ is \emph{graded} if $I = (I\cap A^0) \oplus (I \cap A^1)$.
\begin{mydef} \label{def:supersimple}
	A super algebra is \emph{super-simple} if it has no non-trivial graded ideals.
\end{mydef}
\noindent
For finite dimensional super algebras closed under conjugation, super ideals $I$ are principal ideals, i.e., generated by a central even projection $p$ as $I = pA$. For this reason super-simple algebras are always \emph{super-central}: their even center satisfy $Z(A)^0 = \mathbb{C}e$.
\begin{mylem}
	The bond algebra generated by a super matrix product tensor is super-simple. Its action on $\virtHilb$ is non-degenerate. 
\end{mylem}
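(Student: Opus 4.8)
The plan is to derive both assertions from the super-irreducibility axiom~(iii) of Definition~\ref{def:sMPSTensor} --- read, as it must be, for a \emph{non-trivial} even projection $0\neq p\neq e$ --- the bridge being that a central even projection of $A$ forces every Kraus operator $E_s:=E(\psi_s)$ to reduce its range. First I would collect the free structural input: $A$ is a finite-dimensional $*$-closed subalgebra of $\mathscr{L}(\virtHilb)$, hence a $C^*$-algebra, hence unital; its unit $1_A$ is even (the grading automorphism of $A$ fixes it); and $A\neq\{0\}$ because $\grtrans_1(e)=e\neq 0$. The one computation needed is the following: if $p\in A^0$ is a projection central in $A$, then $pE_s=E_sp$ for every $s$, so, using~\ref{eq:grtransEL} with $L$ the identity (whence $\langle\psi_s,\psi_t\rangle=\delta_{st}$), for homogeneous $x$ --- and then by linearity for all $x$ ---
\begin{align*}
\grtrans_1(p\,x\,p)&=\sum_s(-1)^{|\psi_s||x|}\,E_s\,(p\,x\,p)\,E_s^*\\
&=\sum_s(-1)^{|\psi_s||x|}\,p\,E_s\,(p\,x\,p)\,E_s^*\,p\\
&\in\ p\,\mathscr{L}(\virtHilb)\,p\ ,
\end{align*}
since $pE_s=E_sp$ makes each summand fixed by $y\mapsto pyp$. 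Thus $\grtrans_1\!\left(p\,\mathscr{L}(\virtHilb)\,p\right)\subseteq p\,\mathscr{L}(\virtHilb)\,p$, which by axiom~(iii) with $n=1$ is impossible unless $p=0$ or $p=e$.

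Non-degeneracy is the instance $p=1_A$: this is an even projection central in $A$, so the computation gives $\grtrans_1\!\left(1_A\mathscr{L}(\virtHilb)1_A\right)\subseteq 1_A\mathscr{L}(\virtHilb)1_A$; since $1_A\neq 0$, axiom~(iii) forces $1_A=e$. Consequently $\xi=1_A\xi\in A\virtHilb$ for every $\xi\in\virtHilb$, so $A\virtHilb=\virtHilb$ and no non-zero vector is annihilated by $A$ --- the action on $\virtHilb$ is non-degenerate.

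For super-simplicity, let $I\subseteq A$ be a non-zero graded ideal. By the structure theorem for finite-dimensional $*$-closed super algebras recalled just before Definition~\ref{def:supersimple}, $I=pA$ for some central even projection $p\neq 0$. If $p\neq e=1_A$, then $p$ is non-trivial and the computation above yields $\grtrans_1\!\left(p\,\mathscr{L}(\virtHilb)\,p\right)\subseteq p\,\mathscr{L}(\virtHilb)\,p$, contradicting axiom~(iii). Hence $p=e$ and $I=1_AA=A$; so $A$ has no proper non-zero graded ideal, i.e.\ it is super-simple.

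I do not expect a genuine obstacle here: the content is the standard dictionary between central idempotents, common reducing subspaces, and irreducibility of completely positive maps, transcribed to the $\mathbb{Z}_2$-graded setting. The only care points are bookkeeping --- that $A$ is unital with an \emph{even} unit, that the parity signs in~\ref{eq:grtransEL} are harmless because each summand is separately of the form $p(\,\cdot\,)p$, and that axiom~(iii) is to be understood for $0\neq p\neq e$ --- together with the invocation of the quoted structure theorem identifying graded ideals with principal ideals generated by central even projections.
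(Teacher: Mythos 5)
Your proof of super-simplicity is essentially the paper's own argument: a graded ideal is $pA$ for a central even projection $p$, centrality makes $p$ commute with every Kraus operator, hence $\grtrans(pxp)=p\grtrans(x)p$ and $p$ reduces $\grtrans$, contradicting axiom (iii). The bookkeeping points you flag (evenness of $p$, the harmlessness of the Koszul signs, reading (iii) for non-trivial $p$) are all as in the paper.

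For non-degeneracy you take a genuinely different and narrower route. You show that the unit $1_A$ of the finite-dimensional $*$-closed algebra $A$ is an even projection commuting with all $E_s$, hence reduces $\grtrans$, hence $1_A=e$; this gives $A\virtHilb=\virtHilb$, i.e.\ non-degeneracy in the standard representation-theoretic sense. That argument is correct. The paper instead observes that \emph{any} graded $A$-invariant subspace of $\virtHilb$ yields (via $*$-closedness) an even projection commuting with $A$, which reduces $\grtrans$ and so must be trivial by (iii). This is strictly stronger than what you prove: your conclusion $1_A=e$ does not exclude, say, a super-simple $A$ acting with multiplicity (e.g.\ $\mathrm{Mat}_2(\mathbb{C})$ embedded diagonally in $\mathscr{L}(\mathbb{C}^2\oplus\mathbb{C}^2)$ has unit $e$ and no graded ideals, yet admits proper invariant graded subspaces). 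The stronger statement is what the text actually uses afterwards, when arguing that the action ``explores fully'' $\virtHilb$ so that the bond dimension is exactly $D$ and cannot be reduced by shrinking $\virtHilb$. So: if ``non-degenerate'' is read literally as ``no vector is annihilated by $A$'', your proof is complete and arguably cleaner; if it is meant (as the paper's proof and its subsequent use suggest) as ``no proper graded reducing subspace'', you should add the one extra line: any such subspace gives, by $*$-closedness of $A$, an even projection in the commutant of $A$, and the same computation $\grtrans(pxp)=p\grtrans(x)p$ then contradicts (iii).
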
 
\begin{proof}
	Suppose $I = pA$ was a super ideal in $A$. Then
	\begin{align*}
	\grtrans(pxp) &= \sum_s (-1)^{|E_s||x|} E_s pxp (E_s)^* = \\
	&= p \left[ \sum_s (-1)^{|E_s||x|} E_s x (E_s)^*\right] p = p \grtrans(x)p \ ,
	\end{align*}
	and hence $p$ would reduce $\grtrans$, violating condition (iii) of definition \ref{def:sMPSTensor}. For the second part assume there was a graded subspace of $\virtHilb$ invariant under the action of $A$; the projection $p$ on that subspace is even and reduces $\grtrans$. Note that the two parts are not equivalent since the ideal $pA$ in the second part could be trivial. This would then simply necessitate to update $\virtHilb \rightarrow (1-p)\virtHilb$.
\end{proof}
\noindent
It follows that $\grtrans_{\mathcal{O}}(e)$ generates a super-simple subalgebra of $\mathscr{L}(\virtHilb)$, which, nevertheless, explores fully the even part, i.e., the domain of $\lambda$. This shows that the bond dimension of the state defined by equation \ref{def:sMPSexpval} is indeed $D$. It is also possible to construct more directly\cite{Fannes1992a,Bourne_2021} a super matrix product tensor for any state satisfying condition \ref{def:sMPS}.

\subsection{Properties of Super Matrix Product States} \label{sec:PropertiesofsMPS}
After having constructed super matrix product states, this section derives some of their properties. First, there will be an explanation of how correlation functions are calculated. This leads to an algebraic characterization of the bond algebras of super matrix product states. Afterwards, the modifications required by anti-unitary symmetries are explored.
\\
\\
It still remains to show that the states constructed in equation \ref{def:sMPSexpval} are pure. As was argued before, this is equivalent of clustering of correlations in the sense of equation \ref{eq:omegacluster}. Thus consider regions $X_\pm$ contained in the negative and the positive half space, respectively. Consider operators $\mathcal{O}_{\pm}$ with support in $X_{\pm}$. Their correlations take the form
\begin{align*}
\omega(\mathcal{O}_{-}\mathcal{O}_{+}) = f\circ \grtrans^{d(X_-,X_+)}(q) \ , \quad q = \grtrans_{\mathcal{O}_{+}}(e) \ , \quad f = \lambda \circ \grtrans_{\mathcal{O}_{-}} \ .
\end{align*}
Thus the study of the spectrum of $\grtrans$ is of interest. Using two facts about completely positive maps, it is easy to see that every unit eigenvalue yields a reducing projection. These two facts are (i) $p$ reduces $\grtrans$ if and only if\footnote{This rests on the hereditariness of cones in $C^*$-algebras, see e.g. \cite[Section~1.5]{eilers2018c}.} there is $r > 0$ such that $\grtrans(p) \leq r p$  and (ii) the Kadison-Schwarz inequality\cite{paulsen2002completely} $\grtrans(x^*x) \geq \grtrans(x)^*\grtrans(x)$.
\\
Fact (i) immediately yields that if $p = \grtrans(p)$ is a projection, it reduces $\grtrans$. Next assume $x$ is a fixed point, but not a projection. By the form of $\grtrans$, $x^*$ is also a fixed point, thus w.l.o.g. $x^* = x$. Then by (ii), the $x^n$ are also fixed points, and so is $(z e - x)^{-1}$ for any complex $z$ such that this is defined. Thus, for any eigenvalue $\mu$ of $x$, pick a contour $C_\mu \subset \mathbb{C}$ encircling $\mu$ once counterclockwise and containing no other eigenvalue. Then the spectral projection
\begin{align*}
p_\mu = \oint_{C_\mu} \frac{dz}{2\pi i}  (z e - x)^{-1} 
\end{align*}
is a fixed point of, and thus reduces, $\grtrans$.
\\
Finally assume $\grtrans(x) = u x$ with $|u|=1$. Again, by (ii), $x^n$ is also a fixed point. By finite-dimensionality, there is a $k$ such that $u^k = 1$. Hence $\grtrans^k$ has an eigenvector $x$ to the eigenvalue $1$, hence is reducible.
\\
Introduce the following measure of the largest correlation length in a given system:
\begin{align*}
\textup{Corr}_\omega(k) := \sup\{|\omega(\mathcal{O}_{-}\mathcal{O}_{+}) -\omega(\mathcal{O}_{-})\omega(\mathcal{O}_{+})| \, : \\ \, \|\mathcal{O}_-\| = \|\mathcal{O}_+\| = 1 \ , \ d(X_-,X_+) = k \} \ .
\end{align*}
This leads to the following connection between the transfer matrix spectrum and the correlations in a state:
\begin{myprop} \label{prop:Corrwedge}
	For a pure sMPS $\omega$ with bond algebra $A$ and transfer operator $\grtrans$, there is a constant $1 \geq C>0$ s.t.:
	\begin{align*}
	C \| \grtrans^k - P_e\|_{A} \leq \textup{Corr}_{\omega}(k) \leq \| \grtrans^k - P_e\|_{A} \ , \quad P_e(x) = \lambda(x)e \ .
	\end{align*}
\end{myprop}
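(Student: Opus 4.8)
The plan is to rewrite the connected correlator as a single application of $T_k := \grtrans^k - P_e$ to an element of $A$, paired with a state and a functional, and then to show that, after blocking, half-infinite regions produce enough such elements and functionals to recover the full operator norm of $T_k$ on $A$. Using $\grtrans(e)=e$ and $\lambda\circ\grtrans=\lambda$ one has $T_k(A)\subseteq A$ and $\grtrans P_e=P_e=P_e\grtrans$. For $\mathcal{O}_\pm$ supported in the negative and positive half-line at distance $k$, translation invariance and \eqref{def:sMPSexpval} give $\omega(\mathcal{O}_-\mathcal{O}_+)=f\circ\grtrans^k(q)$ with $q=\grtrans_{\mathcal{O}_+}(e)\in A$ and $f=\lambda\circ\grtrans_{\mathcal{O}_-}\in A^{*}$, whereas $\omega(\mathcal{O}_-)=f(e)$ and $\omega(\mathcal{O}_+)=\lambda(q)$, so that $\omega(\mathcal{O}_-)\omega(\mathcal{O}_+)=\lambda(q)f(e)=f(P_e(q))$. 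Hence
\begin{align*}
\omega(\mathcal{O}_-\mathcal{O}_+)-\omega(\mathcal{O}_-)\omega(\mathcal{O}_+)=f\bigl(T_k(q)\bigr) \ ,
\end{align*}
and $\textup{Corr}_\omega(k)$ is the supremum of $|f(T_k(q))|$ over all such pairs with $\|\mathcal{O}_\pm\|=1$.

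The upper bound is then immediate. Writing $q=(U_n)^{*}(\mathcal{O}_+\sotimes e)U_n$ with the isometry $U_n$ of \eqref{eq:Undef} gives $\|q\|\leq\|\mathcal{O}_+\|$, and, composing Stinespring forms, $f(x)=\lambda\bigl((U_m)^{*}(\mathcal{O}_-\sotimes x)U_m\bigr)$ together with $\|\lambda\|=\lambda(e)=1$ gives $\|f\|_{A^{*}}\leq\|\mathcal{O}_-\|$. Thus $|f(T_k(q))|\leq\|f\|_{A^{*}}\,\|T_k\|_A\,\|q\|\leq\|T_k\|_A$ when $\|\mathcal{O}_\pm\|=1$, and taking the supremum yields $\textup{Corr}_\omega(k)\leq\|\grtrans^k-P_e\|_A$.

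For the lower bound I would show that the pairs $(f,q)$ are rich enough to probe the whole of $A$. The key input is that, after blocking $N$ sites for $N$ large, the linear maps $\mathcal{O}\mapsto\grtrans_{\mathcal{O}}(e)$ and $\mathcal{O}\mapsto\lambda\circ\grtrans_{\mathcal{O}}$ on operators supported on $\{1,\dots,N\}$ are surjective onto $A$ and onto $A^{*}$, respectively. Granting this, finite dimensionality and the open mapping theorem furnish a constant $M\geq 1$ such that every $a\in A$ equals $\grtrans_{\mathcal{O}}(e)$ for some $\mathcal{O}$ with $\|\mathcal{O}\|\leq M\|a\|$ and every $\varphi\in A^{*}$ equals $\lambda\circ\grtrans_{\mathcal{O}}$ for some $\mathcal{O}$ with $\|\mathcal{O}\|\leq M\|\varphi\|$; hence the admissible $q$ contain $M^{-1}$ times the unit ball of $A$ and the admissible $f$ contain $M^{-1}$ times the unit ball of $A^{*}$. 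Using $\sup\{|\varphi(b)|:\varphi\in A^{*},\ \|\varphi\|\leq1\}=\|b\|$ for $b\in A$ (Hahn--Banach),
\begin{align*}
\textup{Corr}_\omega(k) \geq M^{-2}\sup_{\|a\|\leq1}\|T_k(a)\| = M^{-2}\,\|\grtrans^k-P_e\|_A \ ,
\end{align*}
which is the asserted inequality with $C:=M^{-2}\in(0,1]$ (that $M\geq1$ follows from $\|\grtrans_{\mathcal{O}}(e)\|\leq\|\mathcal{O}\|$).

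The main obstacle is precisely this surjectivity after blocking, i.e.\ that $\bigcup_N\operatorname{span}\{E(\psi_{s_1})\cdots E(\psi_{s_N})\,E(\psi_{t_N})^{*}\cdots E(\psi_{t_1})^{*}\}$ exhausts $A$, together with the dual statement for functionals. This is the graded, $*$-algebraic analogue of the classical fact that a pure matrix product tensor becomes injective after blocking, and it is here that super-irreducibility, condition (iii) of Definition~\ref{def:sMPSTensor}, is essential: a reducing even projection would trap the iterated images and prevent exhaustion, in the spirit of the heuristic preceding the statement. I would isolate this as a lemma and prove it by a Fannes--Nachtergaele--Werner-type argument, using that super-irreducibility forces the peripheral spectrum of $\grtrans$ to be $\{1\}$ with one-dimensional eigenspace $\mathbb{C}e$, together with the super-simplicity of $A$ and the non-degeneracy of its action on $\virtHilb$ established above.
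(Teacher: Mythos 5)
Your proposal is correct and follows essentially the same route as the paper: the same rewriting of the connected correlator as $f(T_k(q))$ with $q=\grtrans_{\mathcal{O}_+}(e)$ and $f=\lambda\circ\grtrans_{\mathcal{O}_-}$, the same Stinespring/positivity bound for the upper inequality, and the same reduction of the lower inequality to surjectivity of $\mathcal{O}\mapsto\grtrans_{\mathcal{O}}(e)$ and $\mathcal{O}\mapsto\lambda\circ\grtrans_{\mathcal{O}}$ after blocking, with a uniform preimage-norm constant. The only cosmetic difference is that you obtain that constant from the open mapping theorem while the paper bounds the minimal-preimage-norm function on the compact unit sphere; you are also somewhat more explicit than the paper in flagging the blocked surjectivity as the step requiring a separate lemma.
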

\noindent
The proof can be found in appendix \ref{chap:sMPSSpectra}. The second inequality is quite straight-forward and is included for completeness; the first however is less well-known. Both together have the consequence that the correlation length $\ell_c$ of a sMPS and its second largest eigenvalue $\lambda_2$ are precisely related as $|\lambda_2| = \exp(-1/\ell_c)$. 
\\
Proposition \ref{prop:Corrwedge} proves that super matrix product states correlation function cluster, i.e., that the state is pure.
\\
\\
An attentive reading of proposition \ref{prop:Corrwedge} leads to the question of why the operator norms are taken over $A$ only and not $\mathscr{L}(\virtHilb)$. Now $A$ might be not just super-simple but also simple, i.e., satisfy definition \ref{def:supersimple} where the ideals do not need to be graded. In this case $A=\mathscr{L}(H)$. There is the possibility, however, that there is a non-trivial central self-adjoint $\eta \in A^1$ with $\eta^2 = e$, and $A^0$ is a simple algebra \cite{wall1964graded,10.1007/BFb0082020}. Then $A$ contains two ideals $I_{\pm} = p_\pm A$ with $p_{\pm} = (e\pm \eta)/2$. Such ideals are not graded, hence $A$ is still super-simple. It follows that the fermion parity $P$ is not in $A$. Hence there is an additional fixed point of $\grtrans$ in $\mathscr{L}(\virtHilb)\setminus A$:
\begin{align*}
\grtrans(\eta P) &= \sum_s (-1)^{|E_s|} E_s \eta P (E_s)^* = \\
&= \eta P \left[ \sum_s (-1)^{|E_s|} P^{-1}E_sP(E_s)^* \right] = \eta P \grtrans(e) = \eta P \ .
\end{align*}
This additional fixed point $z := i \eta P$ can never appear in correlation functions of local operators. The presence or absence of such a second fixed point is captured in an index $\mu := \dim Z(A)-1$, with $Z(A)$ the center of $A$. It will appear in the non-local order parameters constructed later on. The argument shows that $\mu = 0$ if $A$ is simple and $\mu=1$ otherwise.
\\
\\
This section will conclude with some developments pertaining to sMPS with anti-unitary symmetries. By condition (ii) of definition \ref{def:sMPSTensor}, an anti-unitary symmetry $K$ is lifted to projective representation $\widehat{K}$. It is characterized by two invariants $(\epsilon,\hat{k})$:
\begin{align}
(a) \quad \widehat{K} \widehat{P} \widehat{K}^{-1} = (-1)^{\hat{k}} \widehat{P} \ , \quad \quad (b) \quad  \widehat{K}^2 = e^{i \pi \epsilon} (\pm)^{\widehat{F}} \ . \label{eq:kepsilondef}
\end{align}
Here $\widehat{P} = (-1)^{\widehat{F}}$ is the fermion parity of $\virtHilb$ and in equation (b) the $+(-)$ sign has to be chosen for a particle-hole (time-reversal) symmetry. Here and in the following, the index defined in (a) is always the lower-case version of the symbol used for the transformation it characterizes. So a lifted particle-hole type transformation $\widehat{C}$ has invariant $\hat{c}$, a lifted time-reversal type transformation $\widehat{T}$ has invariant $\hat{t}$. Physical on-site symmetries are always assumed to have trivial invariant, but for formal developments the use of invariants $k,c,t,...$, possibly with subscripts, associated to anti-linear transformations $K,C,T,...$, is convenient.
\\
 If $K \equiv C$ is a particle-hole conjugation, then $\epsilon \in \{0,1\}$. On the other hand, for a time-reversal operation $K \equiv T$,
\begin{align}
(-1)^{\hat{t}} = \widehat{T} \widehat{P} \widehat{T}^{-1} \widehat{P}^{-1} = e^{2\pi i \epsilon} \ .  \label{eq:timereversalepsilonk}
\end{align}
If $\hat{t} = 0$, the time-reversal operation has eigenvectors in the even sector. Otherwise, it has none at all.
\\
\\
Focussing now on the particle-hole case, note that one can choose a $K$-invariant basis $\psi_1,...,\psi_d$ of $\Lambda(V)$. Then the state is completely captured by the real super-simple subalgebra  $A_\mathbb{R} := \textup{Fix}(\widehat{C}) = \{a \in A \, : \, \widehat{C}a\widehat{C}^{-1} = a \}$. These states could therefore be called \emph{real} super matrix product states.
\\
Introduce first a particular easy type of such algebras, the \emph{real division superalgebras}. These are real superalgebras such that all non-zero homogeneous elements are invertible. By the super version of Wedderburn's theorem, $A_\mathbb{R}$ is isomorphic to a matrix algebra $\textup{Mat}_k(\mathbb{D})$ over a real division superalgebra $\mathbb{D}$. Finally, there is the following theorem by Wall-Deligne \cite{geiko2021dyson}:
\begin{table}
	\centering
	\begin{tabular}{r|r|r|l} 
		\toprule
		$\eta$ & $\mathbb{D}$ 	& $(\mu,\hat{c},\epsilon)$ &Realization  \\
		\toprule
		$0$ & $\mathbb{R}$ & $(0,0,0)$ &   \\
		$1$ & $C\ell_1$    & $(0,0,1)$ & $e = \sigma_1$, $P = \sigma_3$, $U = 1$     \\
		$2$ & $C\ell_2$    & $(0,1,0)$ & $e_1 = \sigma_1$, $e_2 = \sigma_3$, $P = \sigma_2$, $U = 1$         \\
		$3$ & $C\ell_3$    & $(1,1,0)$ & $e_a = \sigma_1 \otimes \sigma_a$, $P = \sigma_3 \otimes 1$, $U = \sigma_2 \otimes \sigma_2$   \\
		$4$ & $\mathbb{H}$ & $(0,0,1)$ &          \\
		$5$ & $C\ell_{-3}$ & $(1,0,1)$ & $e_a = i\sigma_2 \otimes \sigma_a$, $P = \sigma_3 \otimes 1$, $U = \sigma_3 \otimes \sigma_2$        \\
		$6$ & $C\ell_{-2}$ & $(0,1,1)$ & $e_1 = i\sigma_2$, $e_2 = i\sigma_3$, $P = \sigma_1$, $U = \sigma_2$         \\
		$7$ & $C\ell_{-1}$ & $(1,1,1)$ & $e = i\sigma_2$, $P = \sigma_3$, $U = \sigma_2$        \\
	\end{tabular}
	\caption{List of the real division superalgebras together with an indication how to realize them in terms of Pauli matrices. Here $C = U\widehat{C}_0$ where $\widehat{C}_0$ is elementwise complex conjugation.} \label{table:superDivision}
\end{table}
\begin{mythm}
	There are ten real associative division superalgebras:
	\begin{align*}
	\mathbb{C}, \ \mathbb{C}\ell_1; \ \mathbb{R},  \ C\ell_{1},\ C\ell_{ 2}, \ C\ell_{3}, \ \mathbb{H} \ , \ C\ell_{-3},\ C\ell_{- 2}, \ C\ell_{- 3} 
	\end{align*}
\end{mythm}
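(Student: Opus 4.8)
The plan is to decompose $\mathbb{D}=\mathbb{D}^0\oplus\mathbb{D}^1$ and to reduce the whole classification to the data of one odd unit. If $\mathbb{D}^1=0$, then $\mathbb{D}$ is a finite-dimensional associative real division algebra and I would simply quote Frobenius' theorem, getting $\mathbb{R}$, $\mathbb{C}$, $\mathbb{H}$. So assume $\mathbb{D}^1\neq 0$ and fix $0\neq\eta\in\mathbb{D}^1$. Two parity remarks drive the argument. First, if $0\neq a\in\mathbb{D}^0$, then $a^{-1}$ is even: writing $a^{-1}=b^0+b^1$ and comparing parities in $ab^0+ab^1=1$ forces $ab^1=0$, hence $b^1=0$; so $\mathbb{D}^0$ is itself a real division algebra and, by Frobenius again, $\mathbb{D}^0\in\{\mathbb{R},\mathbb{C},\mathbb{H}\}$. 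Second, $\eta^{-1}$ is odd, so $\eta^{-1}x\in\mathbb{D}^0$ for every $x\in\mathbb{D}^1$, i.e. $\mathbb{D}^1=\eta\,\mathbb{D}^0$ is free of rank one over $\mathbb{D}^0$. Hence $\mathbb{D}$ is completely pinned down by the $\mathbb{R}$-algebra automorphism $\varphi:=\mathrm{Ad}_\eta|_{\mathbb{D}^0}$ (conjugation by $\eta$ preserves parity, so maps $\mathbb{D}^0$ to itself) together with the nonzero even element $\eta^2$, the only residual freedom being to replace $\eta$ by another odd unit.

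Next I would work through the three possibilities for $\mathbb{D}^0$. If $\mathbb{D}^0=\mathbb{R}$, then $\varphi=\mathrm{id}$ and $\eta^2\in\mathbb{R}^\times$; since $\eta\mapsto t\eta$ multiplies $\eta^2$ by $t^2>0$, I normalize $\eta^2=\pm1$, obtaining $C\ell_1$ and $C\ell_{-1}$. If $\mathbb{D}^0=\mathbb{C}$, then $\varphi$ is the identity or complex conjugation: when $\varphi=\mathrm{id}$, $\eta$ is central and $\eta^2\in\mathbb{C}^\times$ rescales to $1$ (every complex number is a square), giving $\mathbb{C}\ell_1$; when $\varphi$ is conjugation, the identity $\eta^2=\eta\,\eta^2\,\eta^{-1}=\varphi(\eta^2)=\overline{\eta^2}$ forces $\eta^2\in\mathbb{R}^\times$ while $\eta\mapsto c\eta$ only scales $\eta^2$ by $|c|^2>0$, giving $C\ell_2$ and $C\ell_{-2}$. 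If $\mathbb{D}^0=\mathbb{H}$, then Skolem--Noether makes $\varphi=\mathrm{Ad}_q$ inner for some $q\in\mathbb{H}^\times$, so replacing $\eta$ by the odd unit $q^{-1}\eta$ renders $\eta$ central; then $\eta^2\in Z(\mathbb{H})=\mathbb{R}^\times$, and real rescaling gives $\eta^2=\pm1$, namely $C\ell_3$ for $\eta^2=-1$ and $C\ell_{-3}$ for $\eta^2=+1$. Together with the purely even $\mathbb{R},\mathbb{C},\mathbb{H}$ this is $3+2+3+2=10$ algebras, and the Pauli-matrix data in Table~\ref{table:superDivision} realizes each of them explicitly.

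To close I would check that the ten are pairwise non-isomorphic, which I expect to be routine: among the ten, the underlying ungraded algebras realize $\mathbb{R}$, $\mathbb{C}$, $\mathbb{H}$, $\mathbb{R}\oplus\mathbb{R}$, $\mathbb{C}\oplus\mathbb{C}$, $\mathrm{Mat}_2(\mathbb{R})$, $\mathrm{Mat}_2(\mathbb{C})$, $\mathbb{H}\oplus\mathbb{H}$, and in addition $\mathbb{C}$ and $\mathbb{H}$ a second time (from $C\ell_{-1}$ and $C\ell_{-2}$); the only two coincidences are then separated by the dimension of the odd part. The one genuinely new ingredient over the purely even Frobenius classification, and the step I would flag as the crux, is the control of $\varphi=\mathrm{Ad}_\eta$: the case $\mathbb{D}^0=\mathbb{H}$ requires Skolem--Noether to untwist $\eta$ into a central element before $\eta^2$ can be normalized. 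A minor hygiene point is that Frobenius needs $\mathbb{D}^0$ finite-dimensional, which I would either assume from the outset---it is automatic for bond algebras of super matrix product states---or derive from the standard fact that a real algebraic division algebra is finite-dimensional.
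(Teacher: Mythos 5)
Your argument is correct, and it is worth noting that the paper itself does not prove this theorem at all: it is quoted as the Wall--Deligne classification with a citation, so there is no in-paper proof to compare against. What you have written is essentially the standard (Deligne-style) derivation, and it is sound: the reduction of $\mathbb{D}^0$ to $\{\mathbb{R},\mathbb{C},\mathbb{H}\}$ via Frobenius, the identification $\mathbb{D}^1=\eta\,\mathbb{D}^0$, and the classification of the residual data $(\varphi,\eta^2)$ modulo replacement $\eta\mapsto u\eta$ (which sends $\eta^2\mapsto u\varphi(u)\eta^2$) are all handled correctly, including the two genuinely non-obvious points --- that $\varphi$ is unaffected by rescaling when $\mathbb{D}^0$ is commutative, so ``identity vs.\ conjugation'' is an invariant over $\mathbb{C}$, and that Skolem--Noether kills $\varphi$ as an invariant over $\mathbb{H}$. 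Your sign bookkeeping matches the paper's conventions ($\omega^2=-1$ in $C\ell_3$, $\omega^2=+1$ in $C\ell_{-3}$), and the non-isomorphism check via the pair (underlying ungraded algebra, $\dim\mathbb{D}^1$) does separate all ten cases. The one point you should promote from a ``hygiene remark'' to an explicit hypothesis is finite-dimensionality: as stated in the paper the theorem is literally false without it ($\mathbb{R}(x)$ concentrated in even degree is an infinite-dimensional real associative division superalgebra), and Frobenius is only available once $\mathbb{D}^0$ is known to be finite-dimensional. In the paper's context this is automatic, since $\mathbb{D}$ arises from the bond algebra of a finite-bond-dimension sMPS, but the statement of the theorem should carry the hypothesis.
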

\noindent
$\mathbb{R},\mathbb{C},\mathbb{H}$ are the the classical real division algebras found by Frobenius, then there is the complex Clifford algebra with one generator, and the real Clifford algebras with generators $(e_a)^2=\pm 1$. Note also that $\mathbb{C}$ and $\mathbb{C}\ell_1$ are not central as algebras over $\mathbb{R}$, so they can be excluded from this list in the following. The data $(\mu,\hat{c},\epsilon)$ is in $1$:$1$-correspondence with the remaining $8$ division superalgebras \cite{Bultinck_2017}. Here it is convenient to introduce a $\mathbb{Z}/8\mathbb{Z}$ index $\eta = \mu  + 2 \hat{c} + 4 \epsilon$. In table \ref{table:superDivision} these are listed, together with possible realizations of these algebras in terms of Pauli and Dirac matrices. For a further understanding recall that $C\ell_m \sotimes_\mathbb{R} C\ell_n \cong C\ell_{n+m}$, $C\ell_4 \cong \mathbb{H}$ and that $C\ell_{m+8}$ is a matrix algebra over $C\ell_m$, so that $C\ell_{\eta}$ is a matrix algebra over the Division ring with index $\eta$.
\\
This thus establishes that an sMPS with a particle-hole symmetry is characterized by an algebraic invariant $\eta$ that in turn determines a real division superalgebra $\mathbb{D}$ such that the matrices defining the state $E_s$ can be chosen from $\textup{Mat}_k(\mathbb{D})$.
\\
\\
This has physical consequences, e.g., in the entanglement spectrum, controlled by $\lambda =: \textup{tr}(\rho\, \cdot)$. Then $\rho \in \textup{Mat}_k(\mathbb{D})$, and eigenspaces of $\rho$ are modules over the complexification $\mathbb{D}_\mathbb{C}:= \mathbb{D} \otimes_\mathbb{R} \mathbb{C}$. Since $\dim_\mathbb{C} \mathbb{D}_\mathbb{C} \neq 1$ for $\mathbb{D} \neq \mathbb{R}$, the spectrum is degenerate \cite{fidkowski2010entanglement}. 
\\
Now turn to the time-reversal operation. In that case it is advantageous to consider the even part $ A^0$ only. This yields a semi-simple real algebra $(A^0)_\mathbb{R} :=\textup{Fix}(\widehat{T}) $. In fact, $A^0 = B \oplus B$, for some simple real algebra $B$ \cite[Lemma~6]{wall1964graded}. Hence $B$ is a matrix algebra over $\mathbb{D} \in \{\mathbb{R},\mathbb{H}\}$. 
\\
For more general symmetry groups $G$ containing unitary transformations $G_0$ and anti-unitaries, the bond space decomposes into irreducible representations of $G$. An entanglement operator $\rho$, which has to preserve this structure, will then be a block matrix, where the anti-unitary symmetries affect the type of the blocks that can occur \cite{geiko2021dyson}.
\\
Here, however, the focus will remain on the invariants $\eta, \hat{t}$ which characterize broader features of the bond algebra. Section \ref{sec:RP2PartFuncssMPS} will define certain non-local order parameters in terms of a given state $\omega$ that can extract the invariants, following suggestions from the literature. Afterwards, part \ref{chap:Beyond} will generalize this to the well-approximable states of definition \ref{def:wellapprox}. First, however, section \ref{sec:Diagrammatics} introduces a diagrammatic formalism that is useful in calculations.

\subsection{Diagrammatics} \label{sec:Diagrammatics}
The calculations are most conveniently done in a diagrammatic formalism that will be shortly introduced here, with a special focus also on the implementation of anti-unitary operations.
\\
Given a super vector space $V$, an element $\xi \in V$ is simply represented by a node labeled by the vector and an outgoing arrow:
\begin{align*}
\xi =	\raisebox{-1.225ex}{\includegraphics{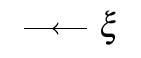}}  .
\end{align*}
Similarly, an element of the dual space $\varphi \in V^*$ is represented by a node labeled by the covector, with an ingoing arrow, 
\begin{align*}
\varphi = \raisebox{-1.225ex}{\includegraphics{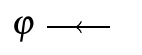}}  .
\end{align*}
The notation is chosen to suggest contractions by attaching the outgoing line of a vector to the ingoing line of a covector, 
\begin{align*}
\varphi(\xi) = \raisebox{-1.225ex}{\includegraphics{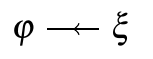} }.
\end{align*}
Finally, linear operators are indicated by boxes:
\begin{align*}
\raisebox{-1.1ex}{\includegraphics{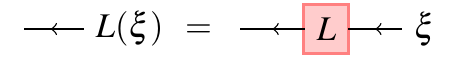}} .
\end{align*}
Super tensor products are implemented by putting all the nodes in a horizontal line, with the convention that exchange of nodes introduces a Koszul sign:
\begin{align}
\raisebox{-1ex}{\includegraphics{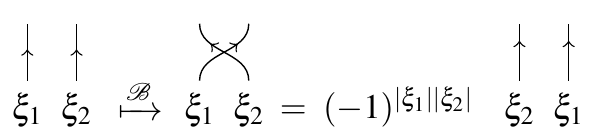}} \ .\label{eq:Braiding}
\end{align}
The matrix product tensor and its adjoint $E^*(\langle \psi, \cdot \rangle) = E(\psi)^*$ are depicted as
\begin{align*}
&\includegraphics{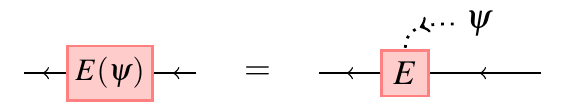}
	\\
	\ \  \ &\includegraphics{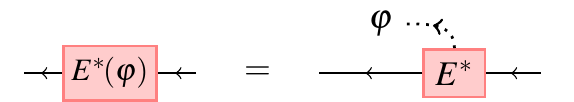} 
\end{align*}
\noindent
\begin{wrapfigure}{l}{0.25\textwidth}
	\vspace{7pt}
	\centering 
	\includegraphics[width=0.25\textwidth]{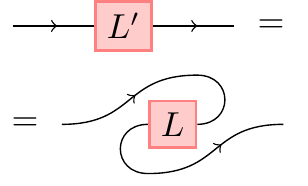}
	\vspace{-20pt}
\end{wrapfigure}
\begin{mydef} \label{ref:gradeddual}
	Let	$L:V\rightarrow W$ be a linear map between super vector spaces $V$ and $W$. Then, the \emph{dual} to $L$ is the linear operator $L':W^* \rightarrow V^*$ defined by
	\begin{align}
	L'(\varphi) = (-1)^{|L||\varphi|} \varphi \circ L  \label{eq:Lgradeddual} 
	\end{align}
\end{mydef}
\noindent
Then introduce $\overline{E}(\varphi):= E^*(\varphi)^\prime$. The operators $\grtrans_L$ of equation \ref{eq:grtransEL} are presented diagrammatically as
\begin{align*}
	\raisebox{-10ex}{\includegraphics{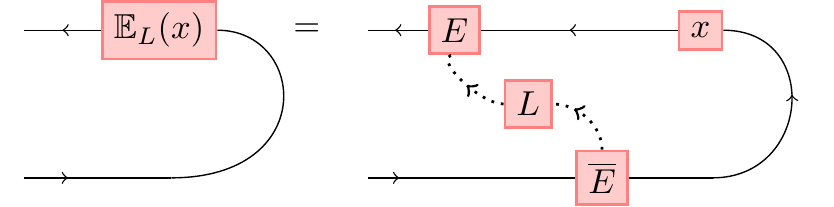}} \ .
\end{align*}
The $\lambda$-functional is presented in terms of $ \Lambda = \widehat{P}\rho_E$ as
\begin{align*}
	\includegraphics{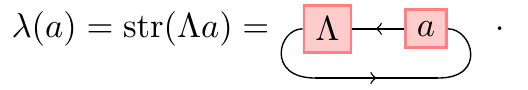}
\end{align*}
The expectation value functional defined in equation \ref{def:sMPSexpval} takes the form:
\begin{align}
\hspace{-2.5ex}	\raisebox{-7.5ex}{\scalebox{0.7}{\includegraphics{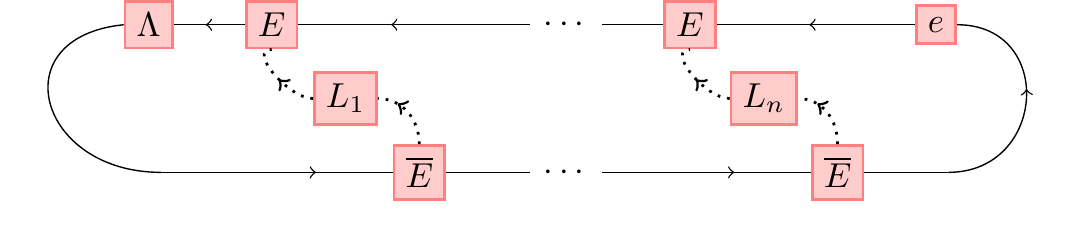} }}  \hspace{-2ex}   \label{eq:omegadiagram}
\end{align}
Diagrammatically representing anti-unitary symmetries is problematic as the diagrams are linear. Therefore the Hilbert structure is used to produce bilinear forms. It requires modifications to be compatible with the grading.
\begin{mydef}
	Suppose $V$ is a super vector space. A non-degenerate even sesquilinear form $h$ is called a \emph{super hermitian form} if it satisfies
	\begin{align}
	\overline{h(\xi_1,\xi_2)} = (-1)^{|\xi_1||\xi_2|} h(\xi_2,\xi_1) \ . \label{eq:gradedsesqui}
	\end{align}
\end{mydef}
\noindent
A super hermitian form $h$ and a super bilinear form on a super Hilbert space $(V,\langle \cdot, \cdot \rangle,K)$ are defined as\footnote{The exponent is chosen as $|\xi|^2$ instead of $|\xi|$ for two reasons: (a) The degree of a vector is a $\mathbb{Z}_2$ variable. Hence, $i^{|\xi|}$ is ill-defined. Consider, e.g., $i^{|L(\xi)|} = (-1)^{|L||\xi|} i^{|L|}i^{|\xi|} \neq i^{|L|}i^{|\xi|}$. This ambiguity is removed by using $|\xi|^2$. (b) More formally, $Q: \mathbb{Z}_2 \rightarrow \mathbb{Z}_4,\ x\mapsto x^2$ is a \emph{quadratic refinement} of the bilinear braiding pairing $B: \mathbb{Z}_2 \times \mathbb{Z}_2 \rightarrow \mathbb{Z}_2, \ (x,y) \mapsto x y$, i.e., $Q$ and $B$ satisfy the relation $2 \cdot B(x,y) = Q(x+y) - Q(x) - Q(y)$, where $2\cdot :\mathbb{Z}_2 \rightarrow \mathbb{Z}_4, \ x \mapsto 2x$. This generalizes to other graduations than $\mathbb{Z}_2$.}
\begin{align}
h(\xi_1,\xi_2) := i^{|\xi_1|^2} \langle \xi_1,\xi_2\rangle \ , \quad \kappa(\xi_1,\xi_2) := h(K(\xi_1),\xi_2) \label{eq:kappadef}
\end{align}
This definition is advantageous as it factorizes nicely under super tensor products; let $\kappa_1,\kappa_2,\kappa_{12}$ be bilinear forms defined by anti-linear $K_1,K_2,K_1 \sotimes K_2$ and inner products $\langle \cdot,\cdot \rangle_1,\langle \cdot,\cdot \rangle_2,\langle\cdot,\cdot\rangle_{12}$ on vector spaces $V_1,V_2,V_1\sotimes V_2$, and
\begin{align}
\begin{aligned}
&[\kappa_1 \sotimes \kappa_2](v_1 \sotimes v_2 , w_1 \sotimes w_2) := \\
 & \quad \quad =(-1)^{|v_2||w_1|+k_2(|v_1|+|w_1|)} \kappa_1(v_1,w_1) \kappa_2(v_2,w_2) \ . \end{aligned}\label{eq:kappa1tenskappa2}
\end{align}
Then $\kappa_{12} = \kappa_1 \sotimes \kappa_{2}$.
\\
Consequently, such bilinear forms have a neat diagrammatic representation:
\begin{align}
\raisebox{-1.0ex}{\includegraphics{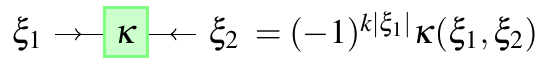} } \ .
\end{align}
A bilinear form defines a transpose:
\begin{align}
\kappa(L^\textit{t} \xi_1,\xi_2) : =
(-1)^{|L|| \xi_1|} \kappa(\xi_1,L\xi_2) \ . \label{eq:kappabilinKtrans}
\end{align}
This is conveniently expressed diagrammatically:
\begin{align}
\raisebox{-1.25 ex}{\includegraphics{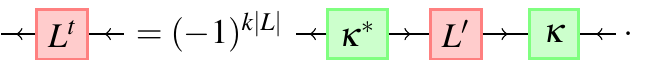}} \label{eq:Ktranspose}
\end{align}
Since the anti-unitaries used in this work have a special form, symmetry properties of $\kappa$ can be derived, depending on the symmetry properties of the underlying $K$. These relations are equations \ref{eq:tausym} and \ref{eq:chisym}. In the following, denote the bilinear forms induced by $T$ and $C$ as $\tau$ and $\chi$, respectively. The two invariants $(k,\epsilon)$ introduced in equation \ref{eq:kepsilondef} are conveniently grouped as $q := k + 2\epsilon$. For a time-reversal operation, equation \ref{eq:timereversalepsilonk} shows that $q = 2 t$.
\begin{align}
&\scalebox{1}{\raisebox{-1.2ex}{\includegraphics{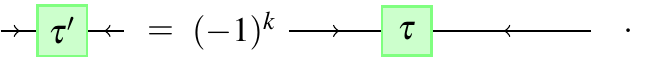}}} \label{eq:tausym} \\
& \scalebox{1}{\raisebox{-1.2ex}{\includegraphics{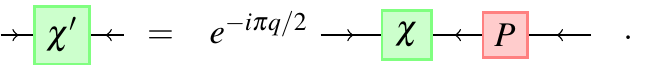}}} \label{eq:chisym}
\end{align}
Finally, return to the setting of a matrix product tensor $E$ satisfying a covariance condition in terms of a physical anti-linear operation $K$ and a lifted $\widehat{K}$. Both can be used to introduce graded bilinear forms $\kappa$ and $\widehat{\kappa}$, respectively.
Then
\begin{align}
E(\kappa^*\varphi) = E^*(\varphi)^\grtranspose \ . \label{eq:EkappaTranspose}
\end{align}
Or, in diagrams:
\begin{align}
\raisebox{-8ex}{\includegraphics{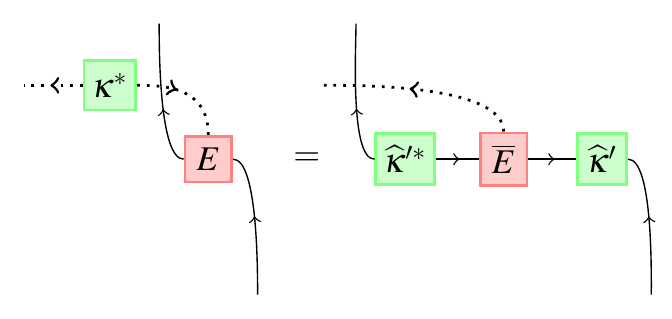}} \label{eq:Ekappapush} 
\ , \\  \raisebox{-8ex}{\includegraphics{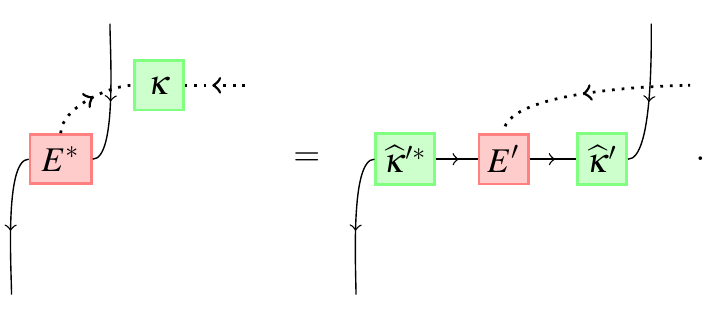}}\label{eq:Estarkappapush}
\end{align}

\subsection{Homotopy Invariants of sMPS} \label{sec:RP2PartFuncssMPS}
By definition topological phases are not characterized by local order parameters, which has shifted the attention to non-local ones \cite{PerezGarcia2008,pollmann2017symmetry}. Most known are those for spin chains with unitary symmetries, which are illustrated in figure \ref{fig:stringorderparameter}.
\begin{figure}
	\centering
	\includegraphics[scale=0.7]{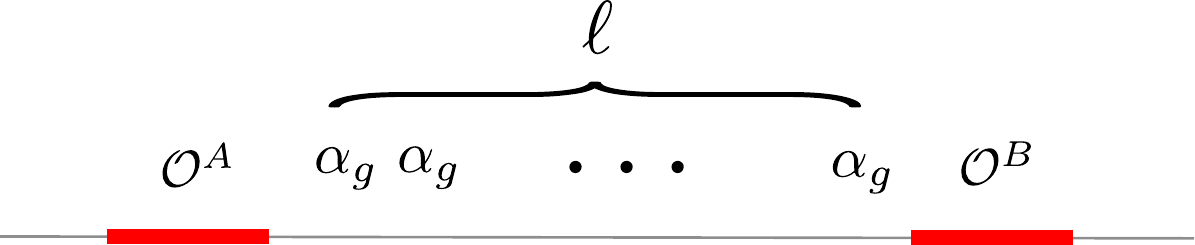}
	\caption{Illustration of the operator setup to compute string order parameters $Z(g):= \lim_{\ell \rightarrow \infty} \omega\left(\mathcal{O}^A \sotimes (\alpha_g)^{\sotimes \ell} \sotimes \mathcal{O}^B \right)$. The operators $\mathcal{O}^{A,B}$ have to be engineered depending on the order that is to be detected.}
	\label{fig:stringorderparameter}
\end{figure}
Notice that these become ambiguous for anti-unitary symmetries, just as the overlap $\langle \Psi_1,T(\Psi_2)\rangle$ is not invariant under redefinitions $\Psi_i \rightarrow e^{i\varphi} \Psi_i$. Information about the relative phase of these rays can be extracted\cite{bargmann1964note} by the function $\langle \Psi_1,T(\Psi_2)\rangle \langle T(\Psi_2),\Psi_3\rangle \langle \Psi_3,\Psi_1\rangle$. Returning to string-order parameters, the step from simple overlaps to the invariant three-vector expression is similar to going from expectation values to expression quadratic in the density operator. Such invariants have been proposed for spin chains with time-reversal symmetry, and calculated for matrix product states\cite{Pollmann2012Detec}. The idea was taken up in a series of papers of Shiozaki et al. and generalized to fermion systems \cite{Shiozaki2017,Ryu2017,Ryu2018,Shapourian2017}. They proposed a formulation in terms of partial transposes, which builds up from the notion of transpose of equation \ref{eq:Ktranspose}:
\begin{mydef} \label{def:partialtranspose}
	If $A_1,A_2$ are superalgebras with graded transposition,
	\begin{align}
	x_1 \sotimes x_2 \mapsto (x_1 \sotimes x_2)^\textup{t} =   (-1)^{k(|x_2|+|x_1|)} (x_1)^{\textup{t}} \sotimes (x_2)^{\textup{t}} \ ,
	\end{align}
	introduce the \emph{partial graded transposes} :
	\begin{align}
	\begin{aligned}
	(x_1 \sotimes x_2)^{\textup{t}_1} &:= (-1)^{k|x_2|}  (x_1)^{\textup{t}} \sotimes x_2 \ , \\
	\quad (x_1 \sotimes x_2)^{\textup{t}_2} &:= (-1)^{k|x_1|} x_1 \sotimes (x_2)^{\textup{t}} \ .
	\end{aligned}
	\end{align}
\end{mydef}
\noindent
Denote the partial graded transpose on a subset $X \subset \mathbb{Z}$ as $\mathcal{O}^{\grtranspose_X}$. If the subset $X$ is of the form $\{1,...,n\}$ then the partial transpose will be abbreviated as $\mathcal{O}^{\grtranspose_{\{1,...,n\}}} \equiv \mathcal{O}^{\grtranspose_n}$.
Next, for
\begin{align*}
{\{1,...,k_1,\} \cup \{d+k_1+1,...,d+k_1+k_2\}} \ ,
\end{align*}
denote the reduced density matrix of a state $\omega$ as $ \sigma_{k_1,k_2|d}(\omega)$. Then, define:
\begin{align}
Z_{k,\ell}^{C}(\omega) & := \Tr(\sigma_{k+\ell} \left[\sigma_{k+\ell}\right]^{\grtranspose_k}) \ , \label{eq:ZklC}  \\
Z_{k_1,k_2|d}^T(\omega) &:= \textup{Tr}(\sigma_{k_1,k_2|d} [\sigma_{k_1,k_2|d}]^{\textup{t}_{k_1}}) \ ,  \label{eq:Zk1k2T}\\
\Delta_n(\omega) & := -\log\Tr(\left[\sigma_n\right]^2) \ .
\end{align}
The last line is the second Renyi entropy, which will play a distinguished r\^{o}le.
\\
There is another way of understanding these invariants. Starting from the working assumption that the low-energy behavior of topological phases is described by a topological field theory, the latter can be used to classify the former \cite{Kapustin2015}. Recent advances \cite{Yonekura:2018ufj,freed2019reflection} show that to determine a given topological field theory\footnote{There are two additional assumptions necessary, (i) unitarity, which comes for free for the physicist; and (ii) invertibility, which models the absence of topological order.}, it suffices to know its partition function on a relatively small number on manifolds, who have also been determined for a wide variety of symmetry groups \cite{kirby1990p,Ryu2018}. The objects \ref{eq:ZklC}, \ref{eq:Zk1k2T} can then be understood as discretizations of a partition function on the real projective plane and the Klein bottle, respectively. This was the argument with which Shiozaki et al. argued that these would be good order parameters. Here, this motivation from topological field theory will not be crucial, instead the line of argument will be to show directly that they extract the algebraic invariants of super matrix product states introduced earlier.

\begin{myprop} \label{prop:ZComega}
	Let $\omega$ be a pure super matrix product state with bond algebra $A$ and virtual parity operator $\widehat{P}$. Then $\Delta(\omega) := \lim_{n\rightarrow \infty} \Delta_n(\omega)$ exists and 
	\begin{align}
	|e^{-\Delta_n(\omega)} - e^{-\Delta(\omega)}| &\leq 2 \| \grtrans^n - P_1\| \ . \label{eq:Deltanconvergence}
	\end{align}
			Here, $P_1$ is the projection on the eigenvalue $1$ in $\mathscr{L}(\virtHilb)$.
	Moreover:
	\begin{itemize}
		\item [(i)] Suppose $\omega$ is invariant under a particle-hole symmetry $C$. Denote by $\widehat{C}$ the lift of $C$ on the bond space, and recall the three indices $\mu$, $(-1)^{\hat{c}} = \widehat{C}\widehat{P}\widehat{C}^{-1}\widehat{P}$ and $(-1)^{\epsilon} = \widehat{C}^2$. Then $Z^C(\omega) := \lim_{k,\ell \rightarrow \infty} Z_{k,\ell}^C(\omega)$ exists and 
		\begin{align}
		\begin{aligned}
		Z^C &= \exp\!\left(-\frac{3}{2} \Delta -2\pi i \frac{\eta_C}{8}   \right) \ , \\
		\eta_C(\omega) &=  4 \epsilon + 2 \hat{c} +\mu \in \mathbb{Z}/8\mathbb{Z} \ .
		\end{aligned}
		\end{align}		
		The convergence is estimated by
		\begin{align}
		|Z^C_{k,\ell}(\omega) - Z^C(\omega)| &\leq 2\left( \| \grtrans^k - P_1\| +  \| \grtrans^\ell - P_1\|\right) \ . \label{eq:ZklConvergence}
		\end{align}
		\item[(ii)] Suppose $\omega$ is invariant under a time-reversal symmetry $T$ with lift $\widehat{T}$. Let $(-1)^{\hat{t}}=\widehat{T}\widehat{P}\widehat{T}^{-1}\widehat{P}$. Then $Z^T(\omega) := \lim_{k_1,k_2,d \rightarrow \infty} Z_{k_1,k_2|d}^T(\omega)$ exists and
		\begin{align}
		\begin{aligned}
		Z^T &= \exp\!\left(-2  \Delta + 2\pi i \frac{\eta_T}{2}  \right) \ , \\ \eta_T(\omega) &=  \hat{t}  \in \mathbb{Z}/2\mathbb{Z} \ ,
		\end{aligned}
		\end{align}
		The convergence is estimated by
		\begin{align}
		\begin{aligned}
		|Z^T_{k_1,k_2|d}&(\omega) - Z^T(\omega)|  \leq \\
		&\leq 2\left( \| \grtrans^{k_1} - P_1\| + \| \grtrans^{d} - P_1\| +  \| \grtrans^{k_2} - P_1\|\right) \ .
		\end{aligned} \label{eq:ZTklConvergence}
		\end{align}
	\end{itemize}
\end{myprop}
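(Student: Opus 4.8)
The overall plan is to reduce every assertion to the spectral behaviour of the transfer operator $\grtrans$ established in the discussion preceding Proposition~\ref{prop:Corrwedge}. Super-irreducibility (condition (iii) of Definition~\ref{def:sMPSTensor} for every $n$) rules out periodicity, so the peripheral spectrum of $\grtrans$ consists of the single semisimple eigenvalue $1$; let $P_1$ denote the associated spectral projection on $\mathscr{L}(\virtHilb)$, whose range is $\mathbb{C}e$ if $\mu=0$ and $\mathbb{C}e\oplus\mathbb{C}z$, $z=i\widehat{P}\eta$, if $\mu=1$. Hence $\grtrans^n\to P_1$, and $\|\grtrans^n-P_1\|$ — which by Proposition~\ref{prop:Corrwedge} decays like $e^{-n/\ell_c}$ — is the small parameter controlling all the convergence estimates.

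\emph{Existence of $\Delta$.} I would first write $e^{-\Delta_n(\omega)}=\Tr([\sigma_n]^2)$ diagrammatically by stacking two copies of the length-$n$ network built from the isometries $U_n$ and contracting their physical legs, collapsing the two copies of $\lambda=\Tr(\rho_E\,\cdot\,)$ into boundary data. The contracted ``bulk'' is a single layer of $\grtrans$-type vertices on the folded chain, applied $n$ times; since this folded operator is completely positive and its action factorises through $\grtrans$ on each of the two layers, replacing it by its limit projection costs $\|\grtrans^n-P_1\|$ per layer. A triangle inequality together with $\Tr\rho_E=\lambda(e)=1$ and the braiding normalisation of Eq.~\eqref{eq:Braiding} then gives the bound~\eqref{eq:Deltanconvergence}, with $e^{-\Delta}$ equal to the boundary data paired against $P_1$.

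\emph{Part (i): particle--hole.} The new input is to convert the partial transpose on $\{1,\dots,k\}$ into a local bond insertion. Choosing a $C$-invariant basis of $\Lambda(V)$ and applying the push-through identities~\eqref{eq:Ekappapush}--\eqref{eq:Estarkappapush} (equivalently Eq.~\eqref{eq:EkappaTranspose}) site by site, transposing the first $k$ physical sites reverses the orientation of their bond legs, and each reversal is absorbed at the price of inserting the lifted particle--hole bilinear form $\widehat{\chi}$, up to Koszul and $i^{|\cdot|^2}$ signs from Eq.~\eqref{eq:kappadef}. The result is a closed, folded diagram with a single $\widehat{\chi}$-turnaround between sites $k$ and $k+1$; sending $k,\ell\to\infty$ collapses the two bulk stretches to $P_1$, so $Z^C$ becomes a finite trace over $\virtHilb$ assembled from $\rho_E$, $\widehat{C}$, $\widehat{P}$, and (when $\mu=1$) $z$. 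Its modulus evaluates to $e^{-\frac{3}{2}\Delta}$ — the $\mathbb{RP}^2$/R\'enyi scaling anticipated from the topological-field-theory heuristic — and its phase is a root of unity. To pin the phase I would exploit that $\mathrm{Fix}(\widehat{C})\cong\mathrm{Mat}_k(\mathbb{D})$, so the trace is insensitive to the matrix amplification and depends only on $\mathbb{D}$, i.e.\ only on $(\mu,\hat{c},\epsilon)$; comparison with the eight standard realisations in Table~\ref{table:superDivision} then yields $e^{-2\pi i\eta_C/8}$ with $\eta_C=4\epsilon+2\hat{c}+\mu$. Replacing the two bulk stretches by $P_1$ one at a time gives~\eqref{eq:ZklConvergence}, the factor $2$ arising from the two folded layers.

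\emph{Part (ii): time reversal, and the main obstacle.} The argument is a variation of Part (i). One restricts to the even part $A^0$, where $\mathrm{Fix}(\widehat{T})$ is a matrix algebra over $\mathbb{D}\in\{\mathbb{R},\mathbb{H}\}$, so the only surviving datum is $\hat{t}\in\mathbb{Z}/2\mathbb{Z}$; the region has three bulk stretches of lengths $k_1,d,k_2$, each projecting to $P_1$ and contributing to the modulus $e^{-2\Delta}$; and closing the folded diagram after the partial transpose on the first block requires commuting $\widehat{T}$ past the $\widehat{P}$ in $\Lambda=\widehat{P}\rho_E$, which produces exactly the sign $(-1)^{\hat{t}}=\widehat{T}\widehat{P}\widehat{T}^{-1}\widehat{P}$ of Eq.~\eqref{eq:kepsilondef}, so $Z^T=e^{-2\Delta}(-1)^{\hat{t}}=\exp(-2\Delta+2\pi i\,\eta_T/2)$ with $\eta_T=\hat{t}$; the estimate~\eqref{eq:ZTklConvergence} follows by successive replacement of the three stretches by $P_1$. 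The genuinely delicate step in both parts is the sign bookkeeping: one must track every Koszul sign from~\eqref{eq:Braiding}, every $i^{|\xi|^2}$ from~\eqref{eq:kappadef}, and the partial-transpose signs of Definition~\ref{def:partialtranspose}, and verify that the residual finite diagram evaluates to the Wall--Deligne invariant $e^{-2\pi i\eta_C/8}$ (resp.\ $(-1)^{\hat{t}}$) and not to that value times a spurious phase. Everything else — existence of the three limits and the exponential rates — is a soft consequence of $\grtrans^n\to P_1$ and the uniform boundedness of the boundary insertions.
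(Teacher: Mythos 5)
Your overall architecture coincides with the paper's: write the R\'enyi-type traces as two stacked copies of the network of equation \ref{eq:omegadiagram}, push the partial transpose to the virtual level via equations \ref{eq:Ekappapush}--\ref{eq:Estarkappapush}, replace each long bulk stretch by the peripheral projection $P_1$ (whose range is $\mathbb{C}e$ or $\mathbb{C}e\oplus\mathbb{C}z$ according to $\mu$), and obtain the convergence bounds by telescoping $[\grtrans\sotimes\grtrans]^n-[P_1\sotimes P_1]=(\grtrans^n-P_1)\sotimes\grtrans^n+P_1\sotimes(\grtrans^n-P_1)$. That part is sound and is exactly how the paper controls the limits.

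The genuine gap is at the decisive step: the evaluation of the residual finite diagram. You assert that its modulus is $e^{-\frac{3}{2}\Delta}$ and its phase is a root of unity, and you propose to pin the phase by arguing that the closed trace ``is insensitive to the matrix amplification and depends only on $\mathbb{D}$,'' then comparing with the eight realizations of Table \ref{table:superDivision}. But the residual diagram still contains the entanglement operator $\rho$ (three or four insertions of it), a general positive element of $\textup{Mat}_k(\mathbb{D})$, so independence of the phase from $\rho$ and from $k$ is precisely what must be \emph{proved}, not assumed; one cannot appeal to homotopy invariance here without circularity, since Proposition \ref{prop:etaTopInv} rests on this proposition. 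The paper closes this gap by decomposing $Z^C=\sum_{abcd}\textup{I}_{ab}\,\textup{I\!I}^{ab}_{cd}\,\textup{I\!I\!I}^{cd}$ and evaluating each subdiagram explicitly: $\textup{I}_{ab}=(-1)^{\hat c}\delta_{ab}\textup{tr}(\rho^2)$, $\textup{I\!I}^{aa}_{bb}=(-i)^{q}(-1)^{\hat c}(-i)^{(a-b)^2}\textup{tr}(\rho^2)$, $\textup{I\!I\!I}^{cd}=\delta_{cd}\textup{tr}(\rho^2)$, which exhibits the factorization (explicit phase in $\mu,\hat c,\epsilon$) $\times$ $\textup{tr}(\rho^2)$ and, after summing over $a,b\in\{0,\dots,\mu\}$, produces the extra $\frac{1-i}{\sqrt2}$ in the $\mu=1$ case that combines with $(-i)^{q}$ to give $e^{-2\pi i(4\epsilon+2\hat c+\mu)/8}$. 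This sign bookkeeping — which you correctly identify as the delicate step but then defer — is the actual content of parts (i) and (ii); without it neither the quantization of the phase nor the appearance of $\mu$ in $\eta_C$ is established.
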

\noindent
The calculation of $Z^C$ will be presented with some level of details, while $Z^T$ and $\Delta$, whose calculation follows similar steps, will be treated rather shortly.
\\
In order to obtain a diagrammatic expression for $Z^C_{k,\ell} \equiv Z_{k,\ell} = \Tr(\sigma_{k+\ell} [\sigma_{k+\ell}]^{\grtranspose_k}) $, obtain the density operator from equation \ref{eq:omegadiagram}. Taking the trace -- instead of a super trace -- necessitates the inclusion of a string of fermion parity operators into the diagram. Finally, the partial transpose can be shifted to the virtual level by the covariance equations \ref{eq:Ekappapush} and \ref{eq:Estarkappapush}. The result then takes the form
\begin{align}
\hspace{-5ex}
\scalebox{0.5}{\raisebox{-20ex}{\includegraphics{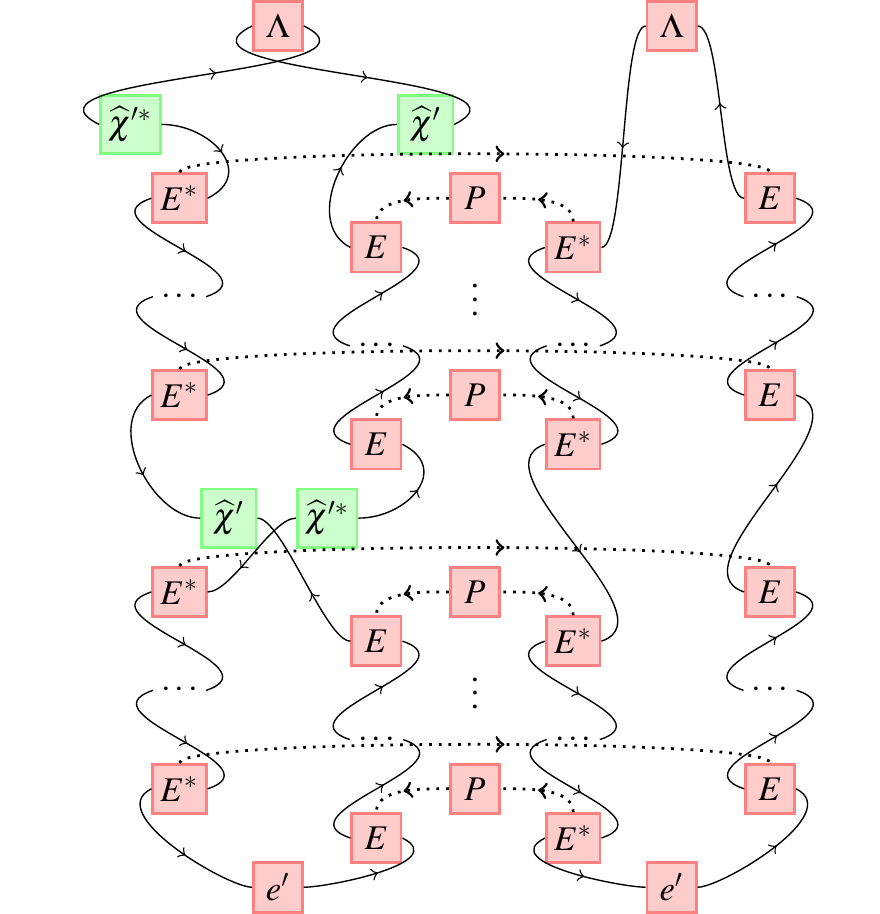}}  = 	\hspace{-8ex}\raisebox{-20ex}{\includegraphics{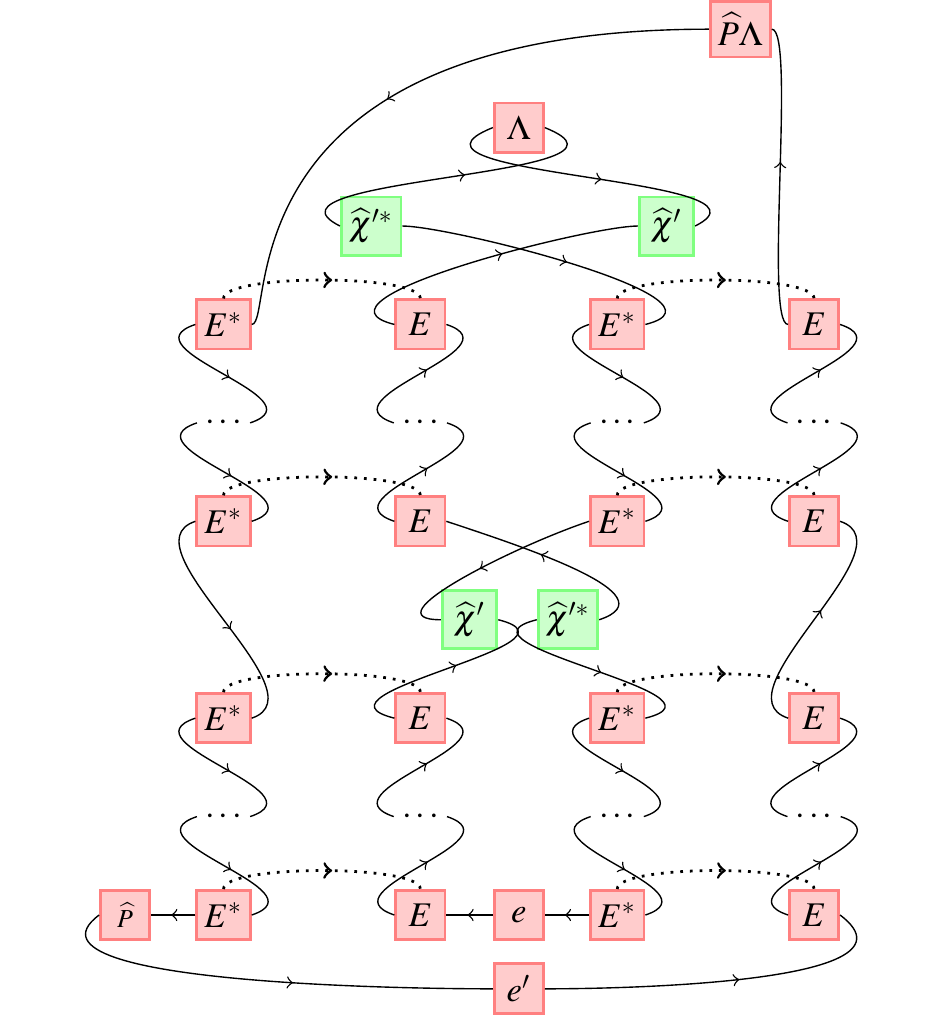}}  } \hspace{-3ex} \label{eq:Zkl}
\end{align}
The limits are taken with the help of
\begin{align}
\lim_{k\rightarrow \infty} \grtrans^k = \lim_{k \rightarrow \infty} \hspace{-2ex} \scalebox{0.7}{\raisebox{-15.5ex}{\includegraphics{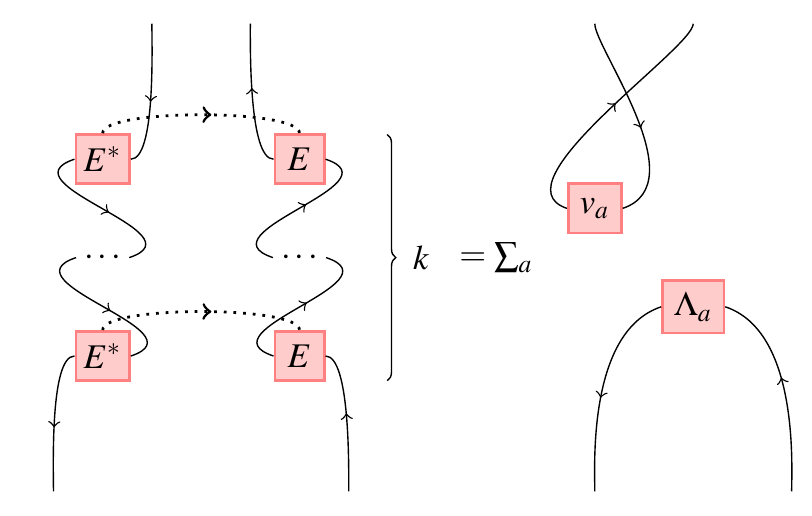}}} \ , \label{eq:transElimitVertical}
\end{align}
where $ v_0 = e,\  v_1 = z, \ \Lambda_0 = P\rho ,\  \Lambda_1 = P\rho z $. Recall that $z = i \eta P$ is the second fixed point of $\grtrans$ introduced below proposition \ref{prop:Corrwedge} and is only present if the bond algebra is not simple ($\mu = 1$ case).
\\
Consequently $Z^C = \sum_{abcd} \textup{I}^{\phantom a}_{ab} \textup{I\!I}^{ab}_{cd} \textup{I\!I\!I}_{\phantom a}^{cd}$ with
\begin{align}
\raisebox{-20ex}{\includegraphics[scale=0.5]{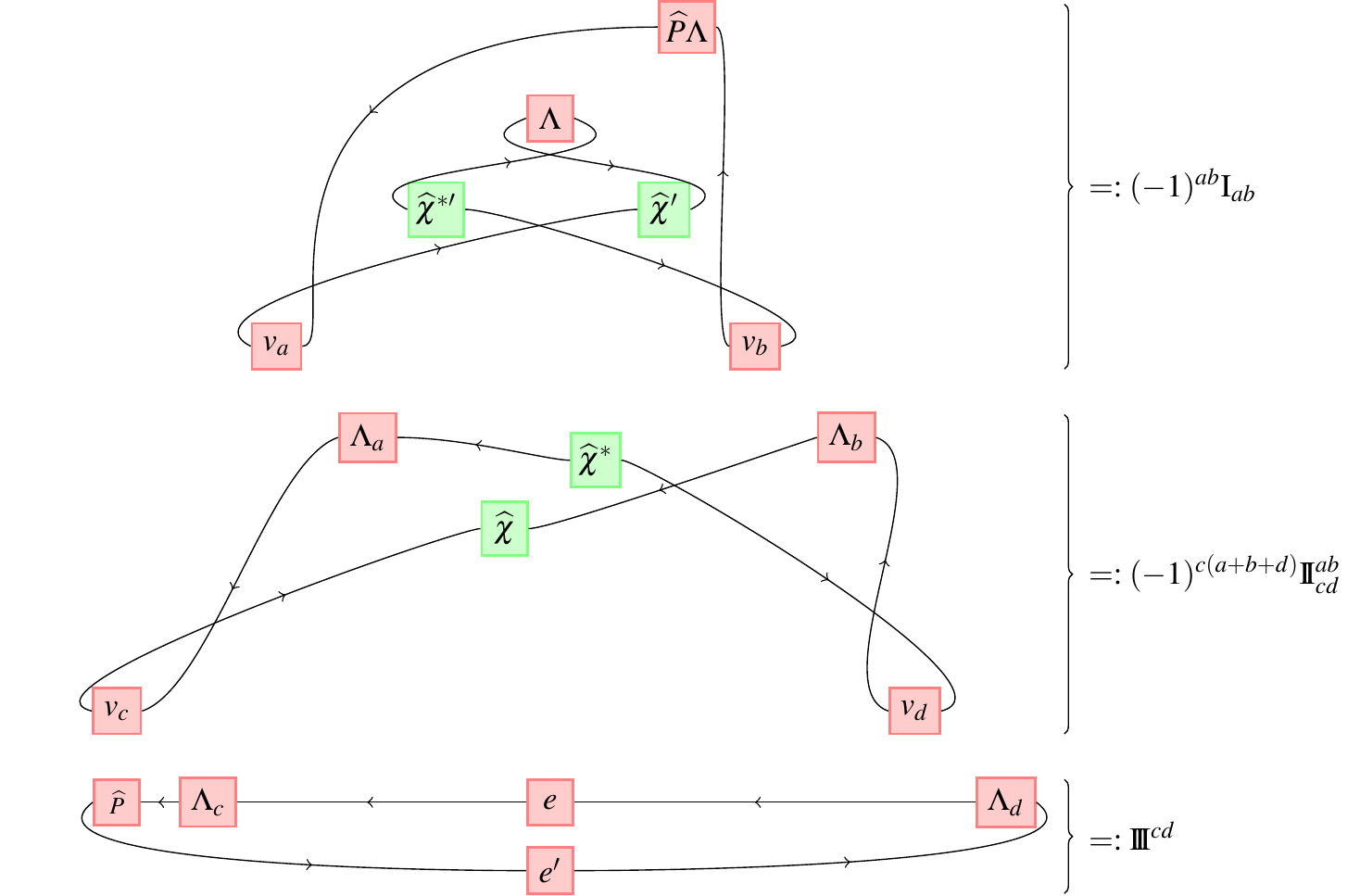}} \label{eq:ZCsubdia}
\end{align}
Separating the subdiagrams (with signs) gives the following expressions:
\begin{align}
\textup{I}_{ab} &= (-1)^{ab} \textup{str}(\widehat{P}\Lambda v_b \Lambda^\textup{t} v_a) = (-1)^{\hat{c}} \delta_{ab}\textup{tr}( \rho^2) \ ,  \\
\textup{I\!I}^{aa}_{bb} &= (-i)^{q} \textup{str}(v_b \Lambda_a (\widehat{P} \Lambda_a v_b)^\textup{t} ) = \nonumber \\
&= (-i)^{q}(-1)^{\hat{c}} (-i)^{(a-b)^2} \textup{tr}(\rho^2) \ ,\\
\textup{I\!I\!I}^{cd} &= \delta_{cd} \textup{tr}( \rho^2) \ .
\end{align}
They can be combined to give
\begin{align*}
Z^C(\omega) &=
\begin{cases}
(-i)^{q}  \left[\textup{tr}( \rho^2)\right]^3 &\ ,\quad \mu = 0  \\
\left[\  (-i)^{q} \  \frac{1 -  i}{\sqrt{2}} \right] \left[\sqrt{2}\textup{tr}( \rho^2)\right]^3 &\ , \quad \mu = 1  \\
\end{cases}
\\
&= \left[ 2^\mu \textup{tr}(\rho^2)^2\right]^{\frac{3}{2}} \exp\!\left(-\frac{2\pi i}{8}( 4 \epsilon + 2\hat{c} + \mu )\right) \ .
\end{align*}
A similar translation procedure for $Z^T$ gives for the limit
$k_1,k_2,d \rightarrow \infty$:
\begin{align}
\hspace{-8ex}\scalebox{0.6}{ \raisebox{-40ex}{ \includegraphics{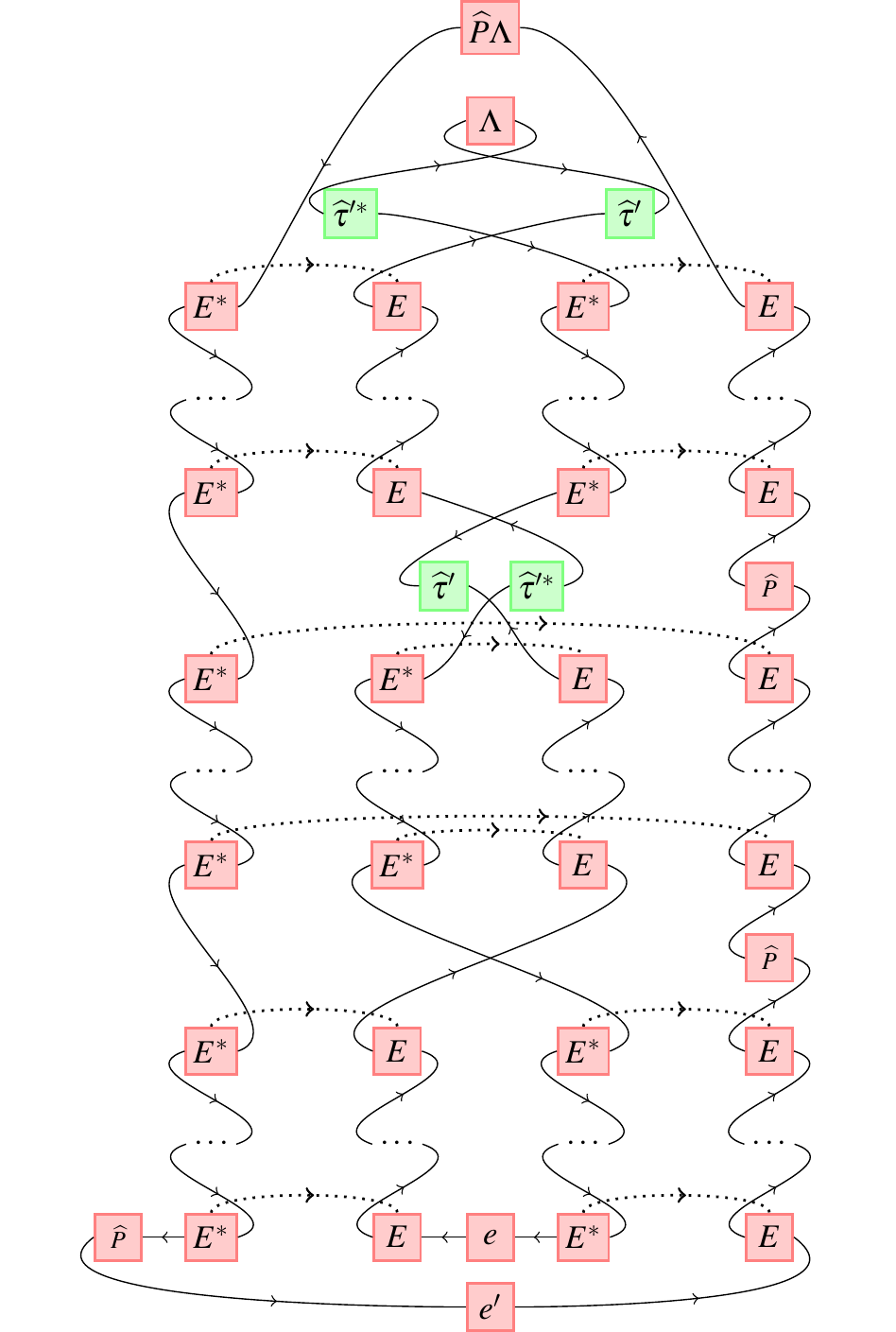} }  \hspace{-4ex} $\rightarrow$ \hspace{-8ex} \raisebox{-40ex}{ \includegraphics{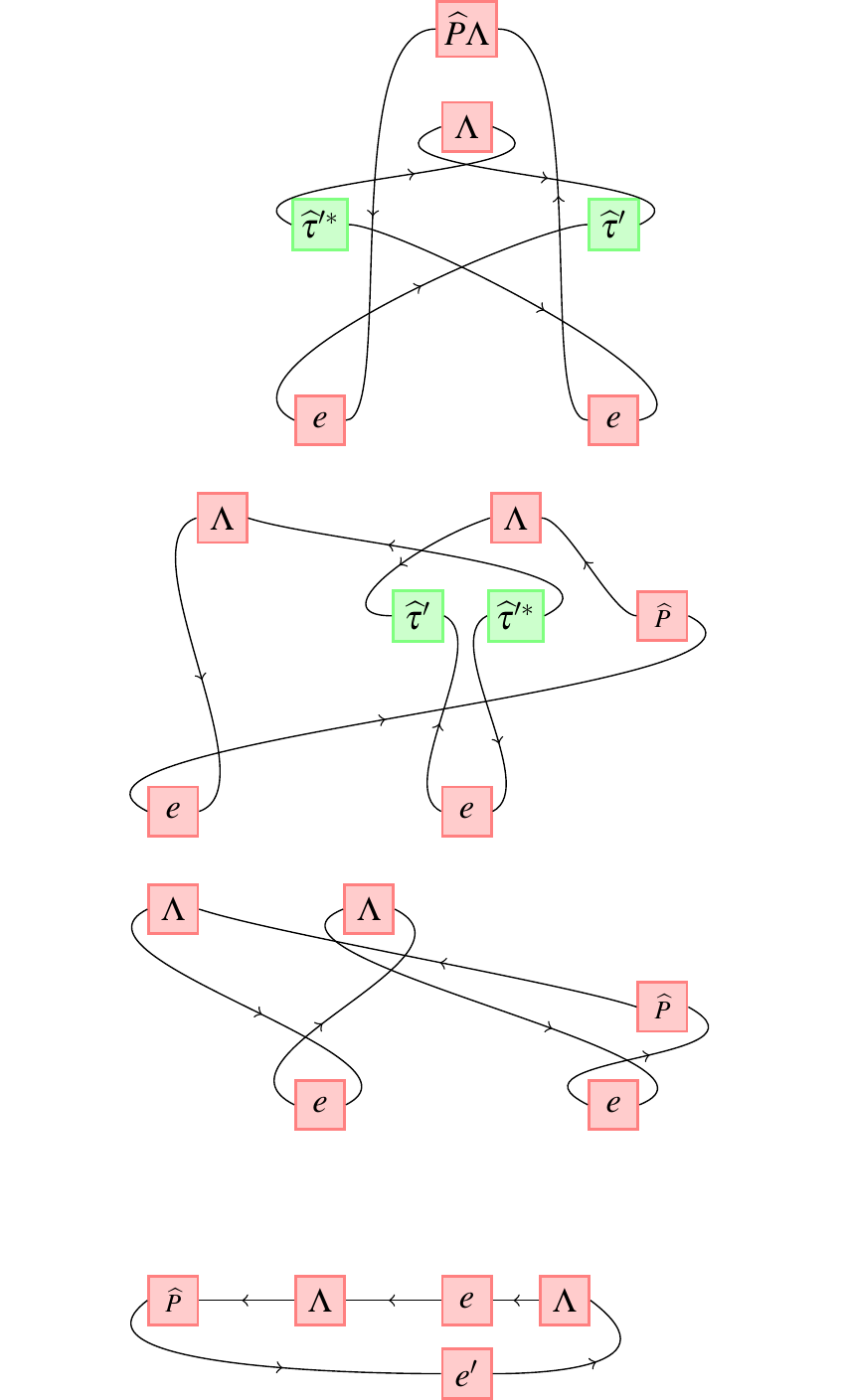}} }\hspace{-10ex} \label{eq:Zkkdlimit}
\end{align}
The uppermost and the lowermost subdiagrams correspond to subdiagrams $\textup{I}$ and $\textup{III}$, respectively, of \ref{eq:ZCsubdia}. The other two are new, but both just give another factor of $\textup{tr}(\rho^2)$. Hence:
\begin{align}
Z^T(\omega) = (-1)^{\hat{t}} \textup{tr}(\rho^2)^4 \ .
\end{align}
The last ingredient, $\Delta_n(\omega)$, is treated quite analogously:
\begin{align*}
e^{-\Delta(\omega)} &= \lim_{n\rightarrow \infty} \textup{Tr}(\sigma_n)^2 \ . 
\end{align*}
The limit is again conveniently done diagrammatically:
\begin{align*}
\scalebox{0.55}{\raisebox{-10ex}{\includegraphics{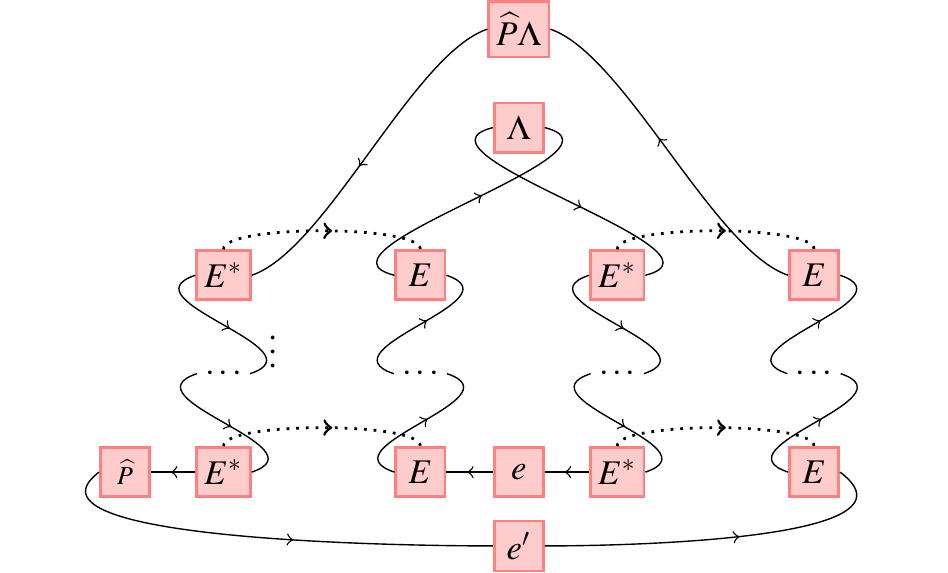} }}  \hspace{-2ex} \longrightarrow  \scalebox{0.55}{\raisebox{-8.5ex}{ \includegraphics{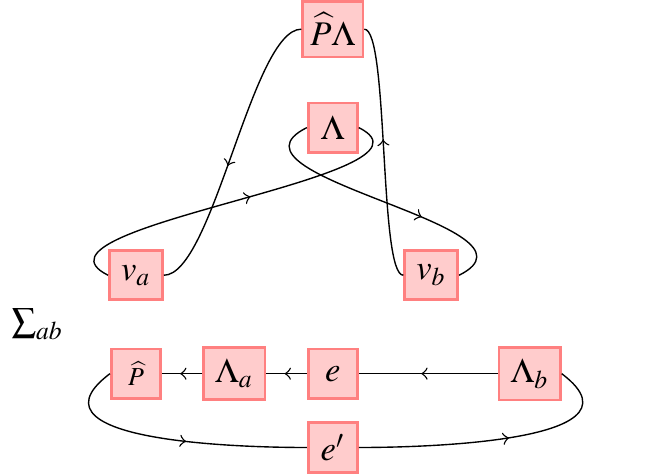}  } } \hspace{-4ex}  \label{eq:Deltalimit}
\end{align*}
Yielding
\begin{align}
e^{-\Delta(\omega)} &= \sum_{ab=0}^\mu \textup{tr}(\Lambda v_b \Lambda v_a) \textup{tr}(e\Lambda_a e \Lambda_b) = \nonumber \\
&= \sum_{ab=0}^\mu \left[ (-1)^{|v_a|} \delta_{ab} \textup{tr}(\Lambda e)^2 \right]^2 = 2^\mu \left[\textup{tr}(\Lambda e)^2\right]^2 \ .
\end{align}
To estimate the speeds of convergence of $Z^C, \ Z^T$ and $\Delta$, it is helpful to consider equations \ref{eq:Zkl},\ref{eq:Zkkdlimit} and \ref{eq:Deltalimit} non-diagrammatically. This will be done here only for the one of them as the other are analogous. Denote by $\mathscr{B}_{24}$ the braiding operator on $\virtHilb \sotimes \virtHilb^* \sotimes \virtHilb \sotimes \virtHilb^*$ that exchanges the second with the fourth factor. Then
\begin{align*}
&\left|e^{-\Delta_n} - e^{-\Delta} \right| = \\
&\quad \quad =  \left| \textup{str}([\rho \sotimes \Lambda] \mathscr{B}_{24} \left( [\grtrans \sotimes \grtrans]^n - [P_1 \sotimes P_1]  \right) (\mathscr{B}_{24})^* [e \sotimes \widehat{P}]) \right| \leq \\
&\quad \quad \leq  \| \rho \sotimes \Lambda\|_1 \|e \sotimes \widehat{P}\| \|(\grtrans^n - P_1) \sotimes \grtrans^n +  P_1 \sotimes(\grtrans^n - P_1)] \| \leq \\
&\quad \quad \leq 2 \| \grtrans^n - P_1\| \ . 
\end{align*}
So all the limits are controlled by $\|\grtrans^n - P_1\|$ and are added up for each of the expressions, which yields exactly the estimates mentioned in the proposition.

\section{Beyond Finite Bond Dimension} \label{chap:Beyond}
Proposition \ref{prop:ZComega} proved that partial transposes allow to construct non-local order parameters for super matrix product states detecting algebraic invariants of the bond algebra:
\begin{align} 
\exp\!\left(\frac{2\pi i}{8} \eta_C(\omega)\right) &:= \lim_{k,\ell \rightarrow \infty} e^{\frac{3}{2}\Delta_{k+\ell}(\omega)} Z^C_{k,\ell}(\omega) \ . \label{eq:etalim} \\
\exp\!\left(\frac{2\pi i}{2} \eta_T(\omega)\right) &:= \lim_{k_1,k_2,d \rightarrow \infty} e^{2\Delta_{k_1+k_2+d}(\omega)} Z^C_{k_1,k_2|d}(\omega) \ . \label{eq:etaTlim}
\end{align}
However, in the class of sMPS states, the corresponding class can be simply read off by investigating the transformation properties of the tensor $E$. It is therefore more interesting to turn the attention to states that are not in such a form. Before taking the thermodynamic limit in equations \ref{eq:etalim} and \ref{eq:etaTlim}, the expressions are manifestly continuous as functions of $\omega$, so it is natural to attempt to take limits. In order to still be able to take the thermodynamic limit, the series considered should be constrained. Here that constraint will be a uniform bound on the correlation length. That is, the invariants will be generalized to the set of well-approximable states of definition \ref{def:wellapprox}.
\begin{myprop} \label{prop:etaTopInv}
	For any $\ell_c < \infty$,
	\begin{itemize}
		\item [(a)] If $C \in G$, $\eta_C \in \mathbb{Z}/8\mathbb{Z}$ is a continuous quantized invariant on $\mathcal{T}^{G}_{\ell_c}$.
		\item [(b)] If $T \in G$, $\eta_T \in \mathbb{Z}/2\mathbb{Z}$ is a continuous quantized invariant on $ \mathcal{T}^{G}_{\ell_c}$.
	\end{itemize}
\end{myprop}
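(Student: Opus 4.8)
The plan is to verify the three ingredients of the claim separately — that the limit defining $\eta_C$ exists on all of $\mathcal{T}^G_{\ell_c}$, that it is valued in $\mathbb{Z}/8\mathbb{Z}$, and that it is continuous (hence constant on homotopies) — reducing everything to Proposition \ref{prop:ZComega} through a uniform-convergence argument. Part (b) is literally the same with $Z^T$ in place of $Z^C$ and $\mathbb{Z}/2\mathbb{Z}$ in place of $\mathbb{Z}/8\mathbb{Z}$, so I only discuss (a). The starting point is the observation that the finite-size quantity $F_{k,\ell}(\omega):=e^{\frac{3}{2}\Delta_{k+\ell}(\omega)}Z^C_{k,\ell}(\omega)$ depends on $\omega$ only through the reduced density matrix $\sigma_{k+\ell}(\omega)$ of the bounded region $\{1,\dots,k+\ell\}$; since $w^*$-convergence of states entails trace-norm convergence of every fixed finite-volume reduced density matrix, and $\Tr(\sigma_{k+\ell}^2)>0$ always, each $F_{k,\ell}$ is a continuous function $\mathcal{T}^G_{\ell_c}\to\mathbb{C}$. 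The whole proposition then follows once $F_{k,\ell}\to\eta_C$ \emph{uniformly} on $\mathcal{T}^G_{\ell_c}$ as $k,\ell\to\infty$.

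To get the uniform rate, fix $\omega\in\mathcal{T}^G_{\ell_c}$, write it as a $w^*$-limit of sMPS $\omega_\alpha$ of correlation length $\le\ell_c$, and — using the symmetric-MPS constructions alluded to after Definition \ref{def:sMPSTensor} — arrange that the $\omega_\alpha$ are $G$-invariant, so that Proposition \ref{prop:ZComega} applies to each of them. Via the estimates \ref{eq:Deltanconvergence} and \ref{eq:ZklConvergence}, that proposition bounds $|F_{k,\ell}(\omega_\alpha)-e^{\frac{2\pi i}{8}\eta_C(\omega_\alpha)}|$ by a multiple of $\|\grtrans_\alpha^k-P_1\|+\|\grtrans_\alpha^\ell-P_1\|$; here one also uses that exponentially clustering states obey an area law for the Rényi-$2$ entropy, so that $\Tr(\sigma_n^2)$ stays bounded away from $0$ uniformly in $n$ and in $\alpha$, which makes the map $x\mapsto x^{-3/2}$ entering $F_{k,\ell}$ uniformly Lipschitz on the relevant interval and keeps the prefactor bounded. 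Proposition \ref{prop:Corrwedge}, together with the uniform correlation bound, then turns this into $|F_{k,\ell}(\omega_\alpha)-e^{\frac{2\pi i}{8}\eta_C(\omega_\alpha)}|\le C(\ell_c)\,e^{-\min(k,\ell)/\ell_c}$ with $C(\ell_c)$ independent of $\alpha$.

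With the uniform estimate in hand, the rest is soft. For the fixed $\omega$, combining the estimate above with $F_{k,\ell}(\omega_\alpha)\to F_{k,\ell}(\omega)$ (each $k,\ell$ fixed, $\alpha\to\infty$) through a triangle inequality shows that $(F_{k,\ell}(\omega))_{k,\ell}$ is Cauchy, so $\eta_C(\omega)$ is well defined and satisfies the same bound $|F_{k,\ell}(\omega)-e^{\frac{2\pi i}{8}\eta_C(\omega)}|\le C(\ell_c)e^{-\min(k,\ell)/\ell_c}$; a Moore--Osgood interchange of limits then gives $e^{\frac{2\pi i}{8}\eta_C(\omega)}=\lim_\alpha e^{\frac{2\pi i}{8}\eta_C(\omega_\alpha)}$, a convergent sequence of $8$th roots of unity. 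Since the $8$th roots of unity form a finite, hence closed, subset of $\mathbb{C}$, the limit is again one of them, i.e.\ $\eta_C(\omega)\in\mathbb{Z}/8\mathbb{Z}$. Finally $\eta_C=\lim_{k,\ell}F_{k,\ell}$ is a uniform limit of continuous functions, hence continuous on $\mathcal{T}^G_{\ell_c}$; being continuous with values in the discrete group $\mathbb{Z}/8\mathbb{Z}$ it is locally constant, and in particular constant along any continuous path $[0,1]\to\mathcal{T}^G_{\ell_c}$, which is exactly the asserted homotopy invariance.

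The hard part is the uniformity claimed in the second paragraph, and it is entirely due to the bond dimensions $D_\alpha$ of the approximating sMPS being unbounded. Proposition \ref{prop:Corrwedge} controls $\|\grtrans_\alpha^k-P_e\|$ on the \emph{bond algebra} in terms of the correlations, but I need the decay of $\|\grtrans_\alpha^k-P_1\|$ on all of $\mathscr{L}(\virtHilb_\alpha)$, and with a constant that does not blow up as $D_\alpha\to\infty$; this requires checking that the constant in Proposition \ref{prop:Corrwedge} can be taken universal, that the extra fixed point $z=i\eta P$ (present when $\mu=1$) does not spoil the estimate, and — most delicately — that no polynomial-in-$k$ prefactors $k^{m_\alpha}$ with $m_\alpha\to\infty$ arise from Jordan blocks near the subleading eigenvalue of $\grtrans_\alpha$; all of these should be consequences of the uniform exponential-clustering hypothesis but must be extracted uniformly. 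A secondary, more bookkeeping-level point is to justify that a $G$-invariant well-approximable state admits $G$-invariant sMPS approximants with correlation length still bounded by (a constant times) $\ell_c$, which is what lets Proposition \ref{prop:ZComega} be invoked for the $\omega_\alpha$ in the first place.
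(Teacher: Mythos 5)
Your argument follows the paper's own route: approximate $\omega$ by $G$-invariant sMPS with a uniform correlation-length bound, use the finite-volume continuity of $e^{\frac{3}{2}\Delta_{k+\ell}}Z^C_{k,\ell}$ as a function of the reduced density matrix (the paper's Lemma \ref{prop:etacont}) together with the uniform-in-$\alpha$ convergence rate supplied by Propositions \ref{prop:Corrwedge} and \ref{prop:ZComega}, interchange the two limits, and deduce quantization and homotopy invariance from the discreteness of the eighth (resp.\ second) roots of unity. The uniformity issue you flag at the end --- that Proposition \ref{prop:Corrwedge} controls the transfer operator only on the bond algebra while the error bounds involve $P_1$ on all of $\mathscr{L}(\virtHilb)$, and that Jordan blocks could spoil the rate --- is precisely the point the paper itself treats tersely, by perturbing the approximants so the transfer operators are diagonalizable and appealing to the uniform correlation-length hypothesis, so your accounting matches the paper's argument rather than departing from it.
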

\noindent
Recall that the set of well-approximable states are those that can be approximated by sMPS with a uniform bound on their correlation length. This automatically ensures that the limit state has finite correlation length as well. Removing this assumption would allow for the sequence to approximate e.g. states describing critical systems. In that situation \cite{Calabrese2009}, $\Delta_n(\omega) \sim A \log(n)$ and $Z \sim n^{-3A/2}$, and the argument of $Z$ is ill-defined.
\\
In the following, the analysis will focus on $\eta = \eta_C$, with the understanding that the proof of $\eta_T$ is analogous. 
\begin{mylem} \label{prop:etacont}
	Let
	\begin{align*}
	\exp\!\left(\frac{2\pi i}{8} \eta_n(\omega)\right) &:= e^{\frac{3}{2}\Delta_{k+\ell}(\omega)} Z^C_{k,\ell}(\omega) \ , \  n =\textup{min}(k,\ell) \ .
	\end{align*}
	If $\nu_1,\nu_2$ satisfy an $R_2$-area law, $\Delta_n(\nu_i) \leq \Delta_*$, then
	\begin{align*}
	&\left|e^{\frac{2\pi i}{8} \eta_{n_1}(\nu_1)}  - e^{\frac{2\pi i}{8} \eta_{n_2}(\nu_2)}\right| \leq \\
	&\quad\quad\quad\leq \frac{3 }{2} e^{\frac{5\Delta_*}{2}} \left|e^{-\Delta_{k_1 + \ell_1}(\nu_1)} - e^{-\Delta_{k_2 + \ell_2}(\nu_2)}\right| + \\
	& \quad\quad\quad \quad  + e^{\frac{3\Delta_*}{2}} \left| Z_{k_1\ell_1}(\nu_1) - Z_{k_2\ell_2}(\nu_2)\right| \ . 
	\end{align*}
	In particular,
	\begin{align*}
	&\left|e^{\frac{2\pi i}{8} \eta_{n}(\nu_1)}  - e^{\frac{2\pi i}{8} \eta_{n}(\nu_2)}\right| \leq 5 e^{\frac{5\Delta_*}{2}} \textup{Tr} \left|\sigma_{k + \ell}(\nu_1) - \sigma_{k + \ell}(\nu_2)\right| \ .
	\end{align*}
\end{mylem}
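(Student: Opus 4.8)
The plan is to bound the difference of the phase-corrected invariants $e^{\frac{2\pi i}{8}\eta_n(\nu_i)} = e^{\frac{3}{2}\Delta_{k_i+\ell_i}(\nu_i)} Z_{k_i\ell_i}(\nu_i)$ by a telescoping argument: write it as a product of two factors, $e^{\frac{3}{2}\Delta}$ and $Z_{k\ell}$, and estimate the difference of products by the standard $|ab-a'b'| \le |a||b-b'| + |b'||a-a'|$ trick. The $R_2$-area law hypothesis $\Delta_n(\nu_i)\le\Delta_*$ is exactly what is needed to get uniform a priori bounds on both factors: it gives $e^{\frac{3}{2}\Delta_{k_i+\ell_i}} \le e^{\frac{3}{2}\Delta_*}$, and since $|Z_{k\ell}(\omega)| \le \textup{Tr}[\sigma_{k+\ell}]^2 \cdot(\text{something}) \sim e^{-\Delta}$ up to bounded combinatorial factors (this follows by bounding the trace norm in the diagrammatic/operator expression for $Z_{k\ell}$ just as in the convergence estimates of Proposition \ref{prop:ZComega}), one gets $|Z_{k_i\ell_i}(\nu_i)| \lesssim e^{-\Delta_{k_i+\ell_i}(\nu_i)}$, which combined with the area law is itself $\le 1$, but more usefully is comparable to $e^{-\Delta_*}$-type quantities. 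Putting the pieces together produces the first displayed inequality with the constants $\frac{3}{2}e^{\frac{5\Delta_*}{2}}$ and $e^{\frac{3\Delta_*}{2}}$: the first term comes from $e^{\frac{3}{2}\Delta}$ varying (where one also Lipschitz-estimates $|e^{\frac{3}{2}a}-e^{\frac{3}{2}b}| \le \frac{3}{2}e^{\frac{3}{2}\Delta_*}|e^{-a}-e^{-b}|\cdot e^{\Delta_*}$, using that $x\mapsto x^{-3/2}$ has derivative $\frac{3}{2}x^{-5/2}$ on the relevant interval $[e^{-\Delta_*},\infty)$), and the second from $Z_{k\ell}$ varying, weighted by the bound $e^{\frac{3}{2}\Delta_*}$ on the other factor.

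The second displayed inequality then follows by specializing to $k_1=k_2=k$, $\ell_1=\ell_2=\ell$ and bounding both $|e^{-\Delta_{k+\ell}(\nu_1)}-e^{-\Delta_{k+\ell}(\nu_2)}|$ and $|Z_{k\ell}(\nu_1)-Z_{k\ell}(\nu_2)|$ by $\textup{Tr}|\sigma_{k+\ell}(\nu_1)-\sigma_{k+\ell}(\nu_2)|$. This is the crux: both $e^{-\Delta_n(\omega)} = \textup{Tr}([\sigma_n]^2)$ and $Z_{k,\ell}(\omega) = \textup{Tr}(\sigma_{k+\ell}[\sigma_{k+\ell}]^{\grtranspose_k})$ are quadratic in $\sigma_{k+\ell}$, so their difference is bilinear: $\textup{Tr}(\sigma^2) - \textup{Tr}(\sigma'^2) = \textup{Tr}((\sigma-\sigma')\sigma) + \textup{Tr}(\sigma'(\sigma-\sigma'))$, and similarly for $Z$ with one factor partially transposed. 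Since partial transposition is trace-norm preserving and $\|\sigma\|,\|\sigma'\| \le 1$ (they are reduced density matrices), each bilinear difference is bounded by $2\|\sigma_{k+\ell}(\nu_1)-\sigma_{k+\ell}(\nu_2)\|_1 = 2\,\textup{Tr}|\sigma_{k+\ell}(\nu_1)-\sigma_{k+\ell}(\nu_2)|$. Feeding $2\,\textup{Tr}|\cdots|$ into both terms of the first inequality and coarsely bounding $\frac{3}{2}e^{\frac{5\Delta_*}{2}}\cdot 2 + e^{\frac{3\Delta_*}{2}}\cdot 2 \le 5 e^{\frac{5\Delta_*}{2}}$ (valid since $\Delta_*\ge 0$) yields the claimed constant.

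The main obstacle I anticipate is not the telescoping algebra but making the operator-norm estimate on $Z_{k\ell}$ and on its variation genuinely clean, in particular tracking the partial transpose: one must verify that $X \mapsto X^{\grtranspose_k}$ is an isometry for the trace norm on the relevant super-tensor-product space (it is, because it is built from the bilinear-form transpose of Definition \ref{ref:gradeddual}/equation \ref{eq:Ktranspose}, which is orthogonal, together with Koszul signs that are just $\pm 1$), and that inserting the string of parity operators needed to convert a supertrace to a trace does not spoil the norm bound. Once that bookkeeping is in place — essentially a repackaging of the diagrammatic manipulations already carried out in the proof of Proposition \ref{prop:ZComega} but now at finite $k,\ell$ and comparing two states — the estimates are routine. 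A secondary minor point is ensuring the Lipschitz constant for $x\mapsto x^{-3/2}$ is applied on the correct interval, which is guaranteed precisely because the area law gives $e^{-\Delta_n(\nu_i)} \ge e^{-\Delta_*} > 0$, keeping us away from the singularity at $0$.
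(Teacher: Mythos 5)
Your proof is correct and follows essentially the same route as the paper's own (very terse) proof: the mean-value theorem applied to $x\mapsto x^{-3/2}$ on $[e^{-\Delta_*},\infty)$, where the area law keeps you away from the singularity, plus the observation that $e^{-\Delta_{k+\ell}}$ and $Z_{k,\ell}$ are bilinear in $\sigma_{k+\ell}$. One supporting claim is inaccurate, though it does not affect the result: the (graded) partial transpose is \emph{not} trace-norm preserving in general (otherwise negativity would be a trivial entanglement measure); the needed bound $|\Tr(\sigma'(\sigma-\sigma')^{\grtranspose_k})|\le \Tr|\sigma-\sigma'|$ follows instead by moving the transpose onto the other factor, or from Cauchy--Schwarz in the Hilbert--Schmidt norm, which the partial transpose \emph{does} preserve.
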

\begin{proof}
	If $z$ is an upper bound of both $x,y$, then the middle-value theorem gives $|x^{-3/2} - y^{-3/2}| \leq \frac{3}{2} |z|^{-5/2} |x-y| $. For the second inequality observe that both $e^{-\Delta_{k+\ell}(\omega)}$ and $Z_{k,\ell}$ are bilinears in the density matrix.
\end{proof}
\noindent
\begin{wrapfigure}{l}{0.25\textwidth}
	\vspace{-4ex}
	\begin{tikzpicture}
	
	
	\node (1) at (0,0) {$\eta_n(\omega_\alpha)$};
	\node (2) at (3,0) {$\eta(\omega_\alpha)$};
	\node (2p) at (3.85,0) {$ \in \mathbb{Z}_8$};
	\node (3) at (0,-2.5) {$\eta_n(\omega)$};
	\node(4) at (3,-2.5) {$\eta$};
	
	
	\draw[->] (1)--(2) node[midway,above] {$\scriptstyle{\textup{Proposition \ref{prop:ZComega}}}$};
	\draw[->] (1) -- (3) node[midway,below,rotate=270] {$\scriptstyle{ \textup{Lemma \ref{prop:etacont} } }$};
	\draw[->] (1) -- (4) node[midway,above,rotate=320] {$\scriptstyle{ \textup{Proposition \ref{prop:etacont} }}$};
	\draw[->] (3) -- (4);
	\draw[->] (2)--(4);
	\end{tikzpicture}
	\vspace{-2.5ex}
	\caption{Limits used in the \\construction of $\eta$.} \label{fig:etalimits}
	\vspace{-4 ex}
\end{wrapfigure}
This lemma and proposition \ref{prop:ZComega} establish the two edges in a square, and the yet to prove proposition \ref{prop:etacont} combines them to their resultant. The procedure is illustrated in figure \ref{fig:etalimits}.
\begin{proof}[Proof of proposition \ref{prop:etacont}.]
	Focus on the $C$-case. Take a state $\omega \in  \mathcal{T}^{G}_{\ell_c}$, and pick an approximating sequence $(\omega_\alpha)_\alpha$ of pure $C$-symmetric super matrix product states. Moreover, by perturbing the sequence slightly, ensure that the transfer operators are diagonalizable. Let $\eta_{n\alpha} = \eta_n(\omega_\alpha)$
	\\
	First prove that $ (\eta_n = \eta_{n,\infty})_n$ is Cauchy. Pick $\epsilon >0$ and write, for $z = \exp(2\pi i/8)$:
	\begin{align*}
	|z^{\eta_{n}} - z^{\eta_{m}}| \leq |z^{\eta_{n}} - z^{\eta_{n\alpha}}| + |z^{\eta_{m}} - z^{\eta_{m\alpha}}| + |z^{\eta_{n\alpha}} - z^{\eta_{m\alpha}}| \ .
	\end{align*}
	By Propositions \ref{prop:Corrwedge} and \ref{prop:ZComega}, the convergence of $(\eta_{n\alpha})_n$ is determined by the correlation length $\ell_c$.
	Hence, the convergence in the last term is uniform in $\alpha$ qua assumption.
	Thus, pick $n$ and $m$ large enough such that the last term is smaller than $\epsilon/3$. Then by lemma \ref{prop:etacont}, each of the first two terms can be made smaller than $\epsilon/3$ by increasing $\alpha$. Hence the following double limit exists:
	\begin{align*}
	\lim_{n \rightarrow \infty} \lim_{\alpha \rightarrow \infty} \eta_{n}(\omega_\alpha) = \eta(\omega) \ .
	\end{align*}
	Next, by proposition \ref{prop:ZComega}, the limit $n \rightarrow \infty$ gives a sequence $(\eta_{\alpha})_\alpha \in \mathbb{Z}_8$. By another $\epsilon/3$-argument one can prove that this sequence converges; to that end note that the $8$th roots of unity have a distance $\sqrt{2 - \sqrt{2}}$ from each other. Hence, pick $\epsilon < \sqrt{2 - \sqrt{2}}/2$. Write:
	\begin{align*}
	\left| z^{\eta_{\alpha}} - z^{\eta_{\beta}}\right| \leq \left| z^{\eta_{n\alpha}} - z^{\eta_{\alpha}}\right|  + \left| z^{\eta_{n\beta}} - z^{\eta_{\beta}}\right| + |z^{\eta_{n\alpha}} - z^{\eta_{n\beta}}| \ .
	\end{align*} 
	Pick $n$ large enough such that either of the first two terms is smaller than $\epsilon/3$. Then, pick $\alpha$ and $\beta$ big enough such that the last term is smaller than $\epsilon/3$. But, since $\eta_\alpha$ is quantized and $\epsilon$ is smaller than half the smallest distance between any of them, the sequence actually has to be constant for large enough $\alpha$, as illustrated in figure \ref{fig:etaconvergence}. Hence
	\begin{align*}
	\lim_{\alpha\rightarrow\infty} \lim_{n \rightarrow \infty} \eta_n(\omega_\alpha) = \eta' \in \mathbb{Z}_8 \ .
	\end{align*}
	It remains to show that $\eta(\omega) = \eta = \eta'$, in particular that it is quantized. Again, write
	\begin{align*}
	| z^{\eta'} - z^{\eta}| \leq |z^{\eta'} - z^{\eta_{\alpha}}| + \left|z^{\eta_\alpha} - z^{\eta_{n\alpha}}\right| + \left| z^{\eta_{n\alpha}} - z^{\eta_{n}}\right| + \left|z^{\eta_n} - z^{\eta}\right| \ ;
	\end{align*}
	and use that each term can be made arbitrarily small. To see that $\eta$ is constant on continuous deformations consider continuous paths $t \mapsto \omega_t$. Pick $t_1,t_2$ and let $\eta_i = \eta(\omega_{t_i})$ and $\eta_{i,n} = \eta_n(\omega_{t_i})$. Then
	\begin{align*}
	| z^{\eta_1} - z^{\eta_2}| = | z^{\eta_1} - z^{\eta_{1,n}}| + | z^{\eta_2} - z^{\eta_{2,n}}| + | z^{\eta_{1,n}} - z^{\eta_{2,n}}| \ .
	\end{align*}
	The first two terms decay exponentially, uniform in $t_1,t_2$, as $n\rightarrow 0$, while the latter, for fixed $n$, goes to zero as $t_1\rightarrow t_2$ by assumption. So $\eta$ is continuous, hence, constant.
\end{proof}
\noindent
The two important ingredients in the proof where the continuity of the invariants at finite volume, and the control on the thermodynamic limit. This allowed to construct two homotopy invariants $\eta_C$ and $\eta_T$. As explained in section \ref{sec:PropertiesofsMPS}, these invariants extract broad information about the bond algebra at finite bond dimension. What about infinite bond dimension? Here, no attempt to construct infinite-dimensional super matrix product representations was undertaken, but these exist \cite{doi:10.1142/S0219025798000351,doi:10.1142/S0129055X13500177}.
\begin{wrapfigure}{r}{0.25\textwidth}
	\includegraphics[scale=0.6]{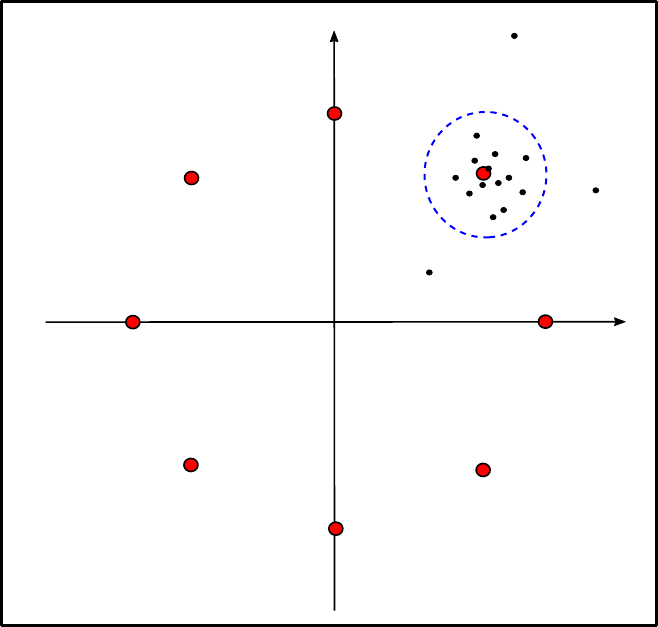}
	\caption{$\eta_n(\omega)$}
	\label{fig:etaconvergence}
\end{wrapfigure}Without explaining the subtleties and modifications that arise in this question, here will be a short outline for an approach to classify $G$-invariant unique gapped ground states. The basic idea is to classify the symmetry representations on the entanglement spectrum, which is extracted from the state by introducing a cut into the system, as in figure \ref{fig:Entanglement}. In the matrix product case, where such spaces are finite-dimensional, this yields an index $I(\omega) \in \mathbb{Z}_2 \times H^1(G,\mathbb{Z}) \times H^2(G,U(1)_\mathfrak{p})$, where the first factor reflects whether a Hilbert space can be associated to a single edge, the second determines the parity of the projective symmetry operations and the last classifies projective $G$-representations with $\mathfrak{p}:G\rightarrow \mathbb{Z}_2$ accounting for anti-unitary operations \cite{Kapustin2018}. Somewhat more surprising, Ogata et al. have since then generalized this idea to all gapped ground states \cite{ogata2019mathbb,ogata2019classification,Bourne_2021,ogata2021classification}. This is an \emph{algebraic} approach that constructs \emph{boundary} invariants -- the boundary can be virtual -- for \emph{all} symmetry groups. In difference, this work produced \emph{analytic} invariants that are defined in the \emph{bulk}, for \emph{certain} features of gapped ground states.
\begin{figure}[H]
	\centering
	\includegraphics[scale=0.6]{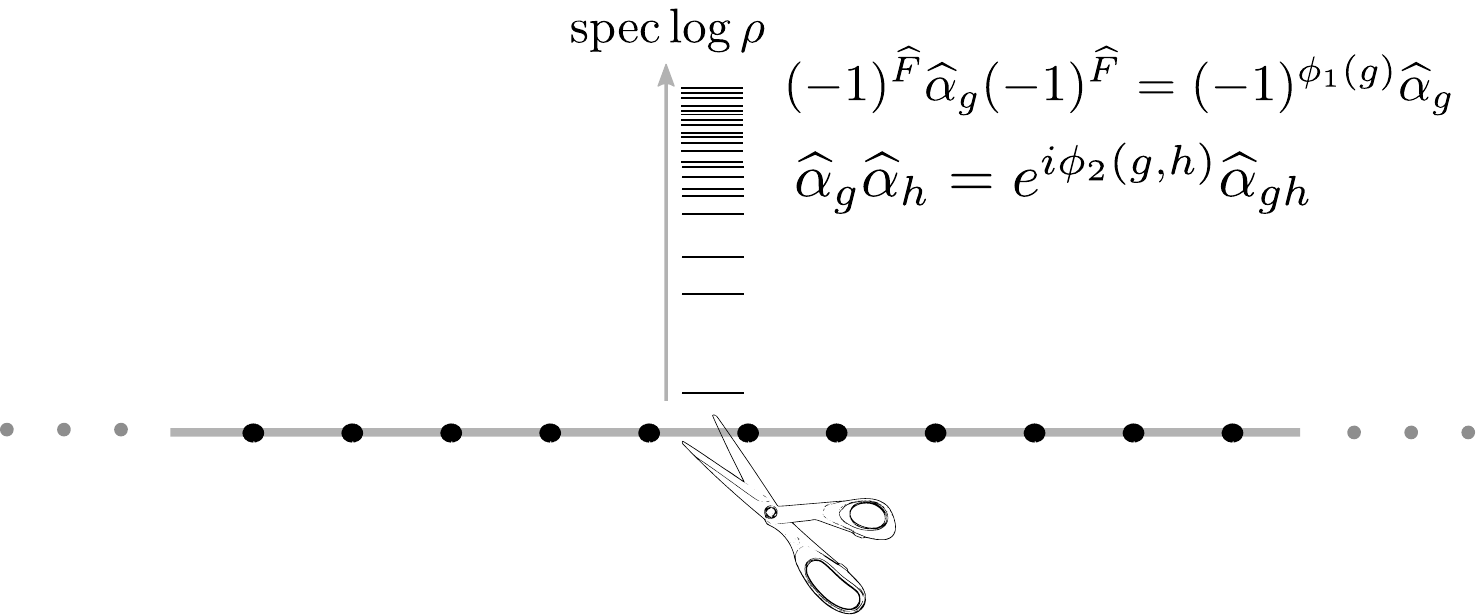}
	\caption{After virtually or really cutting a one-dimensional fermion system, the two halfs transform under projective representations of the symmetry group.	
	} \label{fig:Entanglement}
\end{figure}
\noindent
Algebraic invariants are particular convenient in classificatory work, but if the state is only given indirectly or approximately, they are often difficult to use \cite{pollmann2017symmetry}. For example, the state might have been approximated numerically, or be represented by a (non-topological) quantum field theory. In that context non-local order parameters like the used $\eta_K$ can be useful as they do not depend on the formalism. Furthermore, they can be defined also if the state is not strictly $K$-invariant. While no longer a homotopy invariant, they might still be useful as in recent years people have investigated the disappearance of topological features after the symmetry is broken. The investigations are usually centered on the disappearance of boundary modes \cite{verresen2021quotient}, but $\eta_K$ could give an interesting bulk characterization.

\section{Summary and Outlook}
This paper investigated non-local order parameters in one-dimensional fermionic system defined via anti-unitary operations. The order parameters extract algebraic invariants of the bond algebra, when used on super matrix product states. In particular, when using particle-hole transformations, they extract a real division superalgebra $\mathbb{D}$ such that the bond algebra is a matrix algebra over $\mathbb{D}$. As the definition of the order parameter does not rest on any specific formalism, their domain of definition was enlarged by considering limits of super matrix product states, preserving the symmetry and keeping the correlation length finite in the process. These results, as they were originally inspired by developments in topological field theory, bridge an explicit approach, formulated on the lattice, with the long-range view. Different from unitary symmetries, the considered particle-hole and time-reversal transformations imprint themselves structurally on the bond algebra. It would be interesting to investigate the effect of these structural diversity in more detail; e.g, its implications for correlation functions.
\\
Furthermore, while the exact method used here probably cannot be generalized to higher dimensions, there are certain elements whose exploration in higher dimensions are worthwhile. It should be noted first, that Shiozaki et al. proposed also invariants in higher dimensions \cite{Ryu2018}, which maybe could be verified by different methods. Secondly, the appearances of different \emph{types} of bond algebras could transport to higher dimension. Note that for a tensor network in any dimension, one can associate a vector space, consequentially an algebra, to any edge. Investigation of its algebraic properties could prove useful. 


\begin{acknowledgments}
Partially supported by the Deutsche Forschungsgemeinschaft (DFG, German Research Foundation) – Projektnummer 277101999 – TRR 183 (project C01/XX).
\end{acknowledgments}

\begin{appendix}

	\section{Standard Form of Anti-Unitary Symmetries} \label{appendix:StandardForm}
	A symmetry group generated by an anti-unitary $K$ can be brought in a standard form $U \times A$, where $U$ is an abelian unitary group and either $A \cong \mathbb{Z}_2$ or $A\cong \mathbb{Z}_4$.
	
	\begin{proof}[Proof of lemma \ref{lem:AntiUstandard}]
		Consider the iterates $(K^{2r})_{r \in \mathbb{N}}$. This is an abelian subgroup of the unitaries on $V$. Then this subgroup is either finite (A) or infinite (B).
		\item[(A1)] Consider first the case that there is an odd smallest $n$ such that $K^{2n} = 1$. Then the symmetry is also generated by $\widetilde{K} = K^n$ and $u = K^2$.
		\item[(A2)] Suppose now the smallest $n$ such that $K^{2n} = 1$ is even. Then there is some odd integer $m$ and some integer $r$ such that $2n = 2^rm$, i.e.,  $K^{2^rm} = 1$. Then $K^{2m} = x$ is an involutive unitary. Unitarity is clear, for involutivity consider an eigenvector $v$ with eigenvalue $\lambda$ to $K^{2m}$. Then $\lambda^2 v = x^2(v) = x(\lambda v) = |\lambda|^2 v$ and hence $\lambda = \pm 1$. Then the symmetry group is also generated by $\tilde{K} := K^m$, with $\tilde{K}^4 = 1$ and the unitary $K^2$.
		\item [(B)] If $K^{2r}$ never returns to the identity, it is dense in a $U(1)$ subgroup of unitaries, generated by $X$, say. Hence, $K^2 = \exp(i\alpha X)$ for some real $\alpha$. Then $K$ and $X$ anticommute since  $K \exp(i\alpha X) = K^3 = \exp(i\alpha X) K$. Thus, define a new anti-unitary $\widetilde{K} := \exp(-i\alpha X/2) K$, squaring to $+1$.
	\end{proof}
	\noindent
	
	\section{Transfer Matrix Spectra and Correlation lengths} \label{chap:sMPSSpectra}
	Here is a proof of proposition \ref{prop:Corrwedge}, which was the statement that for any sMPS $\omega$ there is a constant $1 \geq C>0$ s.t.:
	\begin{align*}
	C \| \grtrans^k - P_e\|_{A} \leq \textup{Corr}_{\omega}(k) \leq \| \grtrans^k - P_e\|_{A} \ .
	\end{align*}
	\begin{proof}
		Start with the second inequality. Since $\grtrans_\mathcal{O}$ can be written in terms of the isometry \ref{eq:Undef}, it is positive. As the norm of a positive map is given by its value at the identity,
		\begin{align}
		\|\grtrans_\mathcal{O}(e)\| \leq \|\mathcal{O}\|\|\grtrans_{1}(e)\| = 1 \ ; \label{eq:grObound}
		\end{align}
		and similar for $\|(\grtrans_{\mathcal{O}})^\prime(\lambda)\|$ where $(\grtrans_{\mathcal{O}})^\prime(\lambda) = (-1)^{|\lambda||\mathcal{O}|} \lambda \circ \grtrans_{\mathcal{O}}$.
		Denote by $S^k$ the translation by $k$ sites. Fix local operators $\mathcal{O}_1,\mathcal{O}_2$ with disjoint support. Then:
		\begin{align*}
		|\omega(\mathcal{O}_1 S^k( \mathcal{O}_2))& - \omega(\mathcal{O}_1) \omega(\mathcal{O}_2)| = \\
		&= |\textup{tr}(\rho_E \grtrans_{\mathcal{O}_1} \circ (\grtrans^k - P_e) \circ \grtrans_{\mathcal{O}_2}(e))| \leq \\
		&\leq \|\grtrans^k - P_e\|_{A} \|\grtrans_{\mathcal{O}_2}(e)\| \|(\grtrans_{\mathcal{O}_1} )^\prime(\lambda)\| \leq \\
		&\leq\|\grtrans^k - P_e\|_{A} \|\mathcal{O}_1\| \|\mathcal{O}_2\| \ .
		\end{align*}
		The subscript $A$ indicates that the supremum is taken over the bond algebra only. This proves the inequality on the right hand site.
		\\
		Thus turn to the other inequality. The duality of Schmidt norms gives:
		\begin{align}
		\| \grtrans^k - P_e \|_{A} &= \sup_{a,b\in A} \{ |\textup{tr}(b (\grtrans^k - P_e)(a))| \ | \ \|a\|=1 \ ,  \|b\|_1 = 1    \} \ , \label{eq:grtranskab}
		\end{align}
		where $\|x\|_1 := \textup{tr}|x|$ is the $1$-norm on $A$. Use that for a pure state the Kraus algebra is closed under conjugation so that both $\mathcal{O} \mapsto \grtrans_{\mathcal{O}}(e)$ and $\mathcal{O} \mapsto (\grtrans_\mathcal{O})^\prime(\lambda)$ are surjective. Find $\mathcal{O}_a,\mathcal{O}_b$ s.t. $a = \grtrans_{\mathcal{O}_a}(e)$ and $\textup{tr}(b \, \cdot) = (\grtrans_{\mathcal{O}_b})^\prime(\lambda)$. Then:
		\begin{align*}
		\frac{ |\textup{tr}(b (\grtrans^k - P_e )(a))|}{\|a\| \|b\|_1} &= \left(\frac{\|\mathcal{O}_a\| \|\mathcal{O}_b\|}{\|a\| \|b\|_1}\right) \times \\
		&\times \left[\frac{|\omega(\mathcal{O}_b S^k( \mathcal{O}_a)) - \omega(\mathcal{O}_b) \omega(\mathcal{O}_a)|}{\|\mathcal{O}_a\| \| \mathcal{O}_b\|} \right] \ .
		\end{align*}
		The expression in the square brackets is bounded by $\textup{Corr}_\omega(k)$. To prove the proposition, study the ways of choosing $\mathcal{O}_a$ for given $a \in A$. It is sufficient here to restrict to the unit sphere in $A$. First of all, consider the sequence of maps $f_n: \mathscr{L}(\Lambda(V)^{\sotimes n}) \rightarrow A$ given by $f_n(\mathcal{O}) = \grtrans_\mathcal{O}(e)$. Since $\textup{Im}(f_n) \subset \textup{Im}(f_{n+1})$ and $A$ is finite-dimensional, there is a $n_*$ s.t. $\textup{Im}(f_{n_*}) = A$. Denote $f_{n_*} \equiv f$. Then introduce a function $a\mapsto r_e(a)$, defined for non-zero $a\in A$:
		\begin{align*}
		r_e(a) := 
		\textup{inf}\left\{\|\mathcal{O}\| \ : \ f(\mathcal{O}) = a \right\} \ .
		\end{align*}
		Define a similar function $r_\lambda$ for $\mathcal{O} \mapsto (\grtrans_\mathcal{O})^\prime(\lambda)$. Both $r_e$ and  $r_\lambda$ are, when restricted to the unit sphere in $A$, functions from a compact set into $\mathbb{R}$, hence bounded. Denote these bounds by $C_e$ and $C_\lambda$. Then
		\begin{align*}
		\| \grtrans^k - P_e \|_{A} \leq C_e C_\lambda \textup{Corr}_\omega(k) \ .
		\end{align*}
		Note that $C = (C_eC_\lambda)^{-1}$ satisfies $0 < C \leq 1$ as equation \ref{eq:grObound} gives a lower bound for $r_e$ as $r_e(a) \geq \|a\|$.	
	\end{proof}
	\noindent

\end{appendix}


\begin{thebibliography}{10}

\bibitem{Bachmann2015}
{\sc S.~Bachmann and Y.~Ogata}, {\em {$C^1$}-classification of gapped parent
  {Hamiltonians} of quantum spin chains}, Communications in Mathematical
  Physics, 338 (2015), pp.~1011--1042.

\bibitem{bargmann1964note}
{\sc V.~Bargmann}, {\em Note on {W}igner's theorem on symmetry operations},
  Journal of Mathematical Physics, 5 (1964), pp.~862--868.

\bibitem{Bourne_2021}
{\sc C.~Bourne and Y.~Ogata}, {\em The classification of symmetry protected
  topological phases of one-dimensional fermion systems}, Forum of Mathematics,
  Sigma, 9 (2021).

\bibitem{Brand_o_2014}
{\sc F.~G. S.~L. Brand\~{a}o and M.~Horodecki}, {\em Exponential decay of
  correlations implies area law}, Communications in Mathematical Physics, 333
  (2014), pp.~761--798.

\bibitem{bratteli2012operator}
{\sc O.~Bratteli and D.~W. Robinson}, {\em Operator algebras and quantum
  statistical mechanics: Volume 1: {$C^*$}-and {$W^*$}-{Algebras. Symmetry}
  groups. {Decomposition} of states}, Springer Science \& Business Media, 2012.

\bibitem{Brugnolo2021}
{\sc B.~Bruognolo, J.-W. Li, J.~von Delft, and A.~Weichselbaum}, {\em A
  beginner's guide to non-abelian {iPEPS} for correlated fermions}, SciPost
  Physics Lecture Notes,  (2021), p.~025.

\bibitem{Bultinck_2017}
{\sc N.~Bultinck, D.~J. Williamson, J.~Haegeman, and F.~Verstraete}, {\em
  Fermionic matrix product states and one-dimensional topological phases},
  Physical Review B, 95 (2017), p.~075108.

\bibitem{Calabrese2009}
{\sc P.~Calabrese and J.~Cardy}, {\em Entanglement entropy and conformal field
  theory}, Journal of Physics A: Mathematical and Theoretical, 42 (2009),
  p.~053109.

\bibitem{Chen2010}
{\sc X.~Chen, Z.-C. Gu, and X.-G. Wen}, {\em Local unitary transformation,
  long-range quantum entanglement, wave function renormalization, and
  topological order}, Physical Review B, 82 (2010), p.~155138.

\bibitem{cirac2021matrix}
{\sc J.~I. Cirac, D.~Perez-Garcia, N.~Schuch, and F.~Verstraete}, {\em Matrix
  product states and projected entangled pair states: concepts, symmetries,
  theorems}, Reviews of Modern Physics, 93 (2021), p.~045003.

\bibitem{Dalzell_2019}
{\sc A.~M. Dalzell and F.~G. S.~L. Brand\~{a}o}, {\em {Locally accurate MPS
  approximations for ground states of one-dimensional gapped local
  Hamiltonians}}, Quantum, 3 (2019), p.~187.

\bibitem{eilers2018c}
{\sc S.~Eilers and D.~Olesen}, {\em C*-algebras and their automorphism groups},
  Academic Press, 2018.

\bibitem{https://doi.org/10.1112/jlms/s2-17.2.345}
{\sc D.~E. Evans and R.~H{\o}egh-Krohn}, {\em Spectral properties of positive
  maps on {C}*-algebras}, Journal of The London Mathematical Society-second
  Series,  (1978), pp.~345--355.

\bibitem{Fannes1992b}
{\sc M.~Fannes, B.~Nachtergaele, and R.~F. Werner}, {\em {Abundance of
  translation invariant pure states on quantum spin chains}}, Lett. Math.
  Phys., 25 (1992), pp.~249--258.

\bibitem{Fannes1992a}
\leavevmode\vrule height 2pt depth -1.6pt width 23pt, {\em {Finitely correlated
  states on quantum spin chains}}, Communications in Mathematical Physics, 144
  (1992), pp.~443--490.

\bibitem{fidkowski2010entanglement}
{\sc L.~Fidkowski}, {\em Entanglement spectrum of topological insulators and
  superconductors}, Physical review letters, 104 (2010), p.~130502.

\bibitem{Fidkowski2011}
{\sc L.~Fidkowski and A.~Kitaev}, {\em Topological phases of fermions in one
  dimension}, Physical Review B, 83 (2011), p.~075103.

\bibitem{freed2019reflection}
{\sc D.~S. Freed and M.~J. Hopkins}, {\em Reflection positivity and invertible
  topological phases}, Geometry \& Topology, 25 (2021), pp.~1165--1330.

\bibitem{geiko2021dyson}
{\sc R.~Geiko and G.~W. Moore}, {\em Dyson`s classification and real division
  superalgebras}, Journal of High Energy Physics, 2021 (2021), pp.~1--27.

\bibitem{hastings2021gapped}
{\sc M.~B. Hastings}, {\em Gapped quantum systems: {From higher} dimensional
  {Lieb-Schultz-Mattis} to the quantum {Hall} effect}, arXiv preprint
  arXiv:2111.01854,  (2021).

\bibitem{hastings2006}
{\sc M.~B. Hastings and T.~Koma}, {\em Spectral gap and exponential decay of
  correlations}, Communications in Mathematical Physics, 265 (2006),
  pp.~781--804.

\bibitem{10.1007/BFb0082020}
{\sc T.~J{\'o}zefiak}, {\em Semisimple superalgebras}, in Algebra Some Current
  Trends, Springer, 1988, pp.~96--113.

\bibitem{Kapustin2015}
{\sc A.~Kapustin, R.~Thorngren, A.~Turzillo, and Z.~Wang}, {\em Fermionic
  symmetry protected topological phases and cobordisms}, Journal of High Energy
  Physics, 2015 (2015), pp.~1--21.

\bibitem{Kapustin2017}
{\sc A.~Kapustin, A.~Turzillo, and M.~You}, {\em Topological field theory and
  matrix product states}, Physical Review B, 96 (2017), p.~075125.

\bibitem{Kapustin2018}
\leavevmode\vrule height 2pt depth -1.6pt width 23pt, {\em Spin topological
  field theory and fermionic matrix product states}, Phys. Rev. B, 98 (2018),
  p.~125101.

\bibitem{kastler1966invariant}
{\sc D.~Kastler and D.~W. Robinson}, {\em Invariant states in statistical
  mechanics}, Communications in Mathematical Physics, 3 (1966), pp.~151--180.

\bibitem{kennedy2016topological}
{\sc R.~Kennedy}, {\em {Topological Hopf-Chern insulators and the Hopf
  superconductor}}, Physical Review B, 94 (2016), p.~035137.

\bibitem{kirby1990p}
{\sc R.~C. Kirby and L.~R. Taylor}, {\em Pin structures on low-dimensional
  manifolds}, Geometry of low-dimensional manifolds, 2 (1990), pp.~177--242.

\bibitem{doi:10.1142/S0219025798000351}
{\sc T.~Matsui}, {\em A characterization of pure finitely correlated states},
  Infinite Dimensional Analysis, Quantum Probability and Related Topics, 01
  (1998), pp.~647--661.

\bibitem{doi:10.1142/S0129055X13500177}
\leavevmode\vrule height 2pt depth -1.6pt width 23pt, {\em Boundedness of
  entanglement entropy and split property of quantum spin chains}, Reviews in
  Mathematical Physics, 25 (2013), p.~1350017.

\bibitem{miyata1973clustering}
{\sc H.~Miyata}, {\em On clustering states}, Communications in Mathematical
  Physics, 34 (1973), pp.~1--6.

\bibitem{moore2014quantum}
{\sc G.~W. Moore}, {\em Quantum symmetries and compatible {H}amiltonians},
  Notes available at http://www. physics. rutgers.
  edu/gmoore/QuantumSymmetryBook. pdf,  (2014).

\bibitem{naaijkens2021split}
{\sc P.~Naaijkens and Y.~Ogata}, {\em The split and approximate split property
  in 2d systems: Stability and absence of superselection sectors},
  Communications in Mathematical Physics, 392 (2022), pp.~921--950.

\bibitem{Note1}
This work is in one spatial dimension, where \protect \emph
  {long-range-entangled}(LRE) or \protect \emph {symmetry protected} (SPT) and
  \protect \emph {short-range entangled}(SRE) or \protect \emph {topologically
  ordered} topological phases can be treated simultaneously\cite
  {Chen2010,Fidkowski2011}. These distinctions become more relevant in higher
  dimensions.

\bibitem{Note2}
Another approach is to impose regularity conditions on the path directly\cite
  {Bachmann2015}. Another interesting restriction, which is not pursued here,
  is that the microscopic degrees of freedom are kept constant along the
  path\cite {kennedy2016topological}.

\bibitem{Note3}
This approximation is much better for exponentially correlated states \cite
  {Dalzell_2019,schuch2017matrix}.

\bibitem{Note4}
For Hubbard models with strong interactions, particle-hole transformations are
  good symmetries when correlated hopping is negligible \cite
  {skorenkyy2021electron}.

\bibitem{Note5}
This rests on the hereditariness of cones in $C^*$-algebras, see e.g. \cite
  [Section~1.5]{eilers2018c}.

\bibitem{Note6}
The exponent is chosen as $|\xi |^2$ instead of $|\xi |$ for two reasons: (a)
  The degree of a vector is a $\protect \mathbb {Z}_2$ variable. Hence,
  $i^{|\xi |}$ is ill-defined. Consider, e.g., $i^{|L(\xi )|} = (-1)^{|L||\xi
  |} i^{|L|}i^{|\xi |} \not =i^{|L|}i^{|\xi |}$. This ambiguity is removed by
  using $|\xi |^2$. (b) More formally, $Q: \protect \mathbb {Z}_2 \rightarrow
  \protect \mathbb {Z}_4,\ x\DOTSB \mapstochar \rightarrow x^2$ is a \protect
  \emph {quadratic refinement} of the bilinear braiding pairing $B: \protect
  \mathbb {Z}_2 \times \protect \mathbb {Z}_2 \rightarrow \protect \mathbb
  {Z}_2, \ (x,y) \DOTSB \mapstochar \rightarrow x y$, i.e., $Q$ and $B$ satisfy
  the relation $2 \cdot B(x,y) = Q(x+y) - Q(x) - Q(y)$, where $2\cdot :\protect
  \mathbb {Z}_2 \rightarrow \protect \mathbb {Z}_4, \ x \DOTSB \mapstochar
  \rightarrow 2x$. This generalizes to other graduations than $\protect \mathbb
  {Z}_2$.

\bibitem{Note7}
There are two additional assumptions necessary, (i) unitarity, which comes for
  free for the physicist; and (ii) invertibility, which models the absence of
  topological order.

\bibitem{ogata2019mathbb}
{\sc Y.~Ogata}, {\em A {$\mathbb{Z}_2$}-index of symmetry protected topological
  phases with time reversal symmetry for quantum spin chains}, Communications
  in Mathematical Physics, 374 (2020), pp.~705--734.

\bibitem{ogata2019classification}
\leavevmode\vrule height 2pt depth -1.6pt width 23pt, {\em A classification of
  pure states on quantum spin chains satisfying the split property with on-site
  finite group symmetries}, Transactions of the American Mathematical Society,
  Series B, 8 (2021), pp.~39--65.

\bibitem{ogata2021classification}
{\sc Y.~Ogata}, {\em Classification of symmetry protected topological phases in
  quantum spin chains}, arXiv preprint arXiv:2110.04671,  (2021).

\bibitem{paulsen2002completely}
{\sc V.~Paulsen and C.~U. Press}, {\em Completely bounded maps and operator
  algebras}, Cambridge Studies in Advanced Mathematics, Cambridge University
  Press, 2002.

\bibitem{10.5555/2011832.2011833}
{\sc D.~P\'{e}rez-Garcia, F.~Verstraete, M.~M. Wolf, and J.~I. Cirac}, {\em
  Matrix product state representations}, Quantum Info. Comput., 7 (2007),
  pp.~401--430.

\bibitem{PerezGarcia2008}
{\sc D.~P\'{e}rez-Garcia, M.~M. Wolf, M.~Sanz, F.~Verstraete, and J.~I. Cirac},
  {\em String order and symmetries in quantum spin lattices}, Physical Review
  Letters, 100 (2008), p.~167202.

\bibitem{pollmann2017symmetry}
{\sc F.~Pollmann}, {\em Symmetry protected topological phases in
  one-dimensional systems}, Topological Aspects of Condensed Matter Physics:
  Lecture Notes of the Les Houches Summer School: Volume 103, August 2014, 103
  (2017), p.~361.

\bibitem{Pollmann2012Detec}
{\sc F.~Pollmann and A.~M. Turner}, {\em Detection of symmetry-protected
  topological phases in one dimension}, Physical Review B, 86 (2012),
  p.~125441.

\bibitem{ruelle1966states}
{\sc D.~Ruelle}, {\em States of physical systems}, Communications in
  Mathematical Physics, 3 (1966), pp.~133--150.

\bibitem{schuch2017matrix}
{\sc N.~Schuch and F.~Verstraete}, {\em Matrix product state approximations for
  infinite systems}, arXiv preprint arXiv:1711.06559,  (2017).

\bibitem{Ryu2017}
{\sc H.~Shapourian, K.~Shiozaki, and S.~Ryu}, {\em Many-body topological
  invariants for fermionic symmetry-protected topological phases}, Physical
  Review Letters, 118 (2017), p.~216402.

\bibitem{Shapourian2017}
\leavevmode\vrule height 2pt depth -1.6pt width 23pt, {\em Partial
  time-reversal transformation and entanglement negativity in fermionic
  systems}, Physical Review B, 95 (2017), p.~165101.

\bibitem{Shiozaki2017}
{\sc K.~Shiozaki and S.~Ryu}, {\em Matrix product states and equivariant
  topological field theories for bosonic symmetry-protected topological phases
  in {$(1+1)$} dimensions}, Journal of High Energy Physics, 2017 (2017),
  pp.~1--47.

\bibitem{Ryu2018}
{\sc K.~Shiozaki, H.~Shapourian, K.~Gomi, and S.~Ryu}, {\em Many-body
  topological invariants for fermionic short-range entangled topological phases
  protected by antiunitary symmetries}, Physical Review B, 98 (2018),
  p.~035151.

\bibitem{skorenkyy2021electron}
{\sc Y.~Skorenkyy, O.~Kramar, and Y.~Dovhopyaty}, {\em Electron-hole asymmetry
  in electron systems with orbital degeneracy and correlated hopping}, arXiv
  preprint arXiv:2101.07051,  (2021).

\bibitem{Stinespring1955}
{\sc W.~F. Stinespring}, {\em Positive functions on {$C^\ast$}-algebras},
  Proceedings of the American Mathematical Society, 6 (1955), pp.~211--216.

\bibitem{Szehr2016}
{\sc O.~Szehr and M.~M. Wolf}, {\em Connected components of irreducible maps
  and {1D} quantum phases}, Journal of Mathematical Physics, 57 (2016),
  p.~081901.

\bibitem{Vanderstraeten_2017}
{\sc L.~Vanderstraeten}, {\em Tensor Network States and Effective Particles for
  Low-Dimensional Quantum Spin Systems}, Springer, 2017.

\bibitem{verresen2021quotient}
{\sc R.~Verresen, J.~Bibo, and F.~Pollmann}, {\em Quotient symmetry protected
  topological phenomena}, arXiv preprint arXiv:2102.08967,  (2021).

\bibitem{wall1964graded}
{\sc C.~T.~C. Wall}, {\em Graded brauer groups.}, Walter de Gruyter, Berlin/New
  York,  (1964), pp.~187--199.

\bibitem{Wille2017}
{\sc C.~Wille, O.~Buerschaper, and J.~Eisert}, {\em Fermionic topological
  quantum states as tensor networks}, Physical Review B, 95 (2017), p.~245127.

\bibitem{Yonekura:2018ufj}
{\sc K.~Yonekura}, {\em On the cobordism classification of symmetry protected
  topological phases}, Communications in Mathematical Physics, 368 (2019),
  pp.~1121--1173.

\bibitem{Zauner_2015}
{\sc V.~Zauner, D.~Draxler, L.~Vanderstraeten, M.~Degroote, J.~Haegeman, M.~M.
  Rams, V.~Stojevic, N.~Schuch, and F.~Verstraete}, {\em Transfer matrices and
  excitations with matrix product states}, New Journal of Physics, 17 (2015),
  p.~053002.

\bibitem{Zirnbauer2021}
{\sc M.~R. Zirnbauer}, {\em Particle-hole symmetries in condensed matter},
  Journal of Mathematical Physics, 62 (2021), p.~021101.

\end{thebibliography}

\end{document}